\newcounter{relctr} 
\everydisplay\expandafter{\the\everydisplay\setcounter{relctr}{0}}
\newcommand\labelrel[2]{
	\begingroup
	\refstepcounter{relctr}
	\stackrel{\textnormal{(\alph{relctr})}}{\mathstrut{#1}}
	\originallabel{#2}
	\endgroup
}
\DeclareMathOperator*{\argmax}{arg\,max}
\DeclareMathOperator*{\argmin}{arg\,min}
\DeclareMathOperator\Fix{Fix}
\def\cP{\mathcal{P}}
\def\R{\mathbb{R}}
\def\op{\mathrm{op}}
\newtheorem{thm}{Theorem}[section]
\newtheorem{cor}[thm]{Corollary}
\newtheorem{lm}[thm]{Lemma}
\theoremstyle{remark}
\newtheorem{rk}[thm]{Remark}
\theoremstyle{definition}
\newtheorem{df}[thm]{Definition}
\newtheorem{eg}[thm]{Example}
\providecommand{\keywords}[1]
{
	\small	
	\textbf{\textit{Keywords---}} #1
}
\title{Nash equilibria of games with generalized complementarities}
\date{\today}
\author{Lu Yu\thanks{Université Paris 1 Panthéon-Sorbonne, UMR 8074, Centre d'Economie de la Sorbonne, Paris, France, \href{mailto:yulumaths@gmail.com}{yulumaths@gmail.com}} \,\orcidlink{0000-0001-6154-4229}}
\begin{document}
	\maketitle\begin{abstract}
	To generalize complementarities for games, we introduce some conditions weaker than quasisupermodularity and the single crossing property.  We prove that the Nash equilibria  of a  game satisfying    these conditions form a nonempty complete lattice. This is a purely order-theoretic generalization of Zhou's theorem.
	\end{abstract}	\keywords{Complementarities, Quasisupermodularity, Nash equilibrium, Complete lattice}

\section{Introduction}
Strategic complementarities, a phenomenon ubiquitous in various domains, intricately shape decision-making processes in competitive environments.	For instance, as Samuelson \cite[p.1255]{samuelson1974complementarity} puts it, ``tea and lemon are complements, because tea with lemon makes up our desired brew.'' A joint increase of tea and lemon gives the consumer a benefit exceeding the sum of benefits gained by increasing them separately.

Imprecisely, \emph{strategic complements} defined in \cite[p494]{bulow1985multimarket} refers to the situation where a more aggressive strategy by firm A (e.g.,  lower price, enhanced quality, larger quantity, etc.) raises firm B's marginal profits. In particular, they mutually reinforce one another. To be concrete, consider a simplified mathematical model: a  game with two players (firms A and B) that each has a common twice continuously differentiable payoff function $f(a,b)$ on $\R^2$, where $a$ (resp. $b$) represents the decision of firm A (resp. B). Assume that $f$ is increasing and concave in each argument. We say that  the two decisions are \emph{strategic complements}, if an increase in $a$ raises the marginal payoff $\frac{\partial f}{\partial b}$ of  firm B, i.e.,  if  $\frac{\partial ^2f}{\partial a\partial b}\ge0$. This property is known as  \emph{increasing differences} in \cite[p.42]{topkis1998supermodularity}. A function $g:X\to \R$ on a lattice $X$ is  \emph{supermodular}, if $g(x)+g(y)\le g(x\wedge y)+g(x\vee y)$ for all $x,y\in X$. By Topkis's characterization theorem in \cite[p.1261]{milgrom1990rationalizability}, as $f$ is twice continuously differentiable, the condition $\frac{\partial ^2f}{\partial a\partial b}\ge0$ is also equivalent to the supermodularity of $f:\R^2\to \R$. 

Roughly, a normal form game  is  a \emph{supermodular game} in the sense of \cite[p.179]{topkis1998supermodularity}, if
 the payoff functions are supermodular in own strategy and has increasing differences with others’ strategies. Zhou \cite[Thm.~2]{zhou1994set} (recalled as Corollary \ref{cor:Zhoucomplete}) proves that for a supermodular game where every strategy lattice is compact in a topology finer than the interval topology, the set of Nash equilibria is a nonempty complete lattice.
   Topkis \cite[Thm.~4.2.1]{topkis1998supermodularity} shows the same conclusion for a supermodular game having feasible joint strategies inside Euclidean spaces. Calciano \cite[Theorems 24 and 25]{calciano2010theory} weakens  the increasing difference hypothesis to his \emph{g-modularity} \cite[Def.~11]{calciano2010theory}, but requires every strategy lattice to be a chain. 

The cardinal properties of supermodularity and increasing differences are generalized to ordinal properties, known as \emph{quasisupermodularity} (\cite[p.162]{milgrom1994monotone}) and \emph{single crossing condition} (\cite[p.160]{milgrom1994monotone}) respectively. A normal form game whose payoff functions have the two properties is called a \emph{quasisupermodular game} in \cite[p.179]{topkis1998supermodularity}. The class of quasisupermodular games encompasses a broader range of scenarios compared to supermodular games, while retaining their fundamental characteristics, i.e., the existence and order structure of Nash equilibria.

Inspired by the work \cite{licalzi1992subextremal,veinott1992lattice,agliardi2000generalization}, we propose several conditions strictly weaker than quasisupermodularity and single crossing property used in \cite{milgrom1994monotone}. These extensions require less symmetries, so they  define  classes of games with general complementarities, strictly wider than the class of quasisupermodular games. In the real world, two players with complementarities may be asymmetric. We prove in Theorem \ref{thm:existence} that such games admit Nash equilibria. With some extra conditions,  we show that the set of Nash equilibria  admits a largest element (Theorem \ref{thm:largest}), or forms a nonempty complete lattice (Theorem \ref{thm:structure}).

Section \ref{sec:varquas} recalls the notions of quasisupermodularity and its variants in the literature. We introduce several extensions used in Section \ref{sec:mainTopkis3} and compare them to the classical notions. In Section \ref{sec:whymaxexists}, we show that one extension guarantees the existence of maximum of a function. In the setting of game theory, it becomes the existence of best response of a player.  In Section \ref{sec:mainTopkis3}, we  introduce variants of the single crossing condition. These notions and the   existence of maximum from Section \ref{sec:whymaxexists} are applied to study Nash equilibria of games with general complementarities.  
\subsection*{Notation and conventions}
A set with a partial order is called a \emph{poset}. For a poset $(X,\le)$, let $(X^{\op},\lesssim)$ denote the opposite poset, i.e., it has the same underlying set as $X$, with $y\lesssim x$ in $X^{\op}$ iff $x\le y$ in $X$. For a map $f:X\to Y$ between posets and a value $y\in Y$, let $[f\le y]:=\{x\in X|f(x)\le y\}$ and $[f\ge y]:=\{x\in X|f(x)\ge y\}$ be the lower and the upper level sets respectively.
\section{Variants of quasisupermodularity}\label{sec:varquas}
Quasisupermodularity introduced in \cite[p.162]{milgrom1994monotone} is used to define a class of games with strategic complements, called quasisupermodular games. We review  several  variants of quasisupermodularity in the literature, such as subextremality of LiCalzi and Veinott, and compare them. We also introduce meet-subextremality, then explain in Lemma \ref{lm:sub=w+v} why it is ``half  subextremality".

Let $C$ be a chain, $X$ be a lattice, and $f:X\to C$ be a map.
\begin{df}\label{df:supextremal}
	If for any $x,y\in X$, 
	\begin{enumerate}
		\item\label{it:qsup} (\cite[p.162]{milgrom1994monotone}) the condition $f(x)\ge f(x\wedge y)$ (resp. $f(x)\le f(x\wedge y)$) implies $f(x\vee y)\ge f(y)$ (resp. $f(x\vee y)\le f(y)$), and if the condition $f(x)>f(x\wedge y)$ (resp. $f(x)<f(x\wedge y)$) implies $f(x\vee y)>f(y)$ (resp. $f(x\vee y)<f(y)$), then $f$ is called quasisupermodular (resp. quasisubmodular);
		
		\item the condition $f(x\wedge y)<f(x)$ implies  $f(x)\wedge f(y)<f(x\vee y)$, and if the condition $f(x\vee y)<f(x)$ implies $f(x)\wedge f(y)<f(x\wedge y)$, then $f$ is called weakly quasisupermodular;
		\item (\cite[Def. p.252]{agliardi2000generalization}) the condition $f(x)\vee f(y)\ge f(x\wedge y)$ implies $f(x\vee y)\ge f(x)\wedge f(y)$, and if the condition $f(x)\vee f(y)> f(x\wedge y)$ implies $f(x\vee y)> f(x)\wedge f(y)$, then $f$ is called pseudo-supermodular;
		\item the condition $f(x\vee y)<f(x)\wedge f(y)$ implies $f(x\wedge y)>f(x)$, then $f$ is called weakly pseudo-supermodular;

		\item \textup{(\cite[p.4]{licalzi1992subextremal})} either \begin{equation}\label{eq:subext1}
			f(x\wedge y)\vee f(x\vee y)\le f(x)\vee f(y)\end{equation} or \begin{equation}\label{eq:subext2}f(x\wedge y)\wedge f(x\vee y)\le f(x)\wedge f(y),
		\end{equation} (resp. either \[
		f(x\wedge y)\vee f(x\vee y)\ge f(x)\vee f(y)\] or \[f(x\wedge y)\wedge f(x\vee y)\ge f(x)\wedge f(y)\text{,)}
		\] then $f$ is called subextremal (resp. superextremal);
		\item\label{it:latticesubextremal} (\cite[Ch.~6, Sec.~5]{veinott1992lattice}, \cite[p.5]{licalzi1992subextremal}) there is $t<f(y)$ (resp. $t>f(y)$) in $C$ such that either \[f(x\wedge y)\vee f(x\vee y)\le f(x)\vee t\] or \[f(x\wedge y)\wedge f(x\vee y)\le f(x)\wedge t\] (resp. either \[f(x \wedge y)\vee f(x\vee y)\ge f(x)\vee t\] or \[f(x\wedge y)\wedge f(x\vee y)\ge f(x)\wedge t,)\] then $f$ is called lattice subextremal (resp. lattice superextremal).
		\item either $f(x\wedge y)\le f(x)$ or $f(x\vee y)\le f(x)\vee f(y)$ (resp. either $f(x\wedge y)\ge f(x)$ or $f(x\vee y)\ge f(x)\wedge f(y)$), then $f$ is called meet-subextremal (resp. meet-superextremal);
		\item either $f(x\wedge y)\le f(x)\vee f(y)$ or $f(x\vee y)\le f(x)$ (resp. either $f(x\wedge y)\ge f(x)\wedge
		f(y)$ or $f(x\vee y)\ge f(x)$), then $f$ is called  join-subextremal (resp. join-superextremal);
	\end{enumerate}
\end{df}
If $f$ is quasisupermodular, then $f$ is pseudo-supermodular (\cite[Prop.~2]{agliardi2000generalization}) and weakly quasisupermodular.   If $f$ is pseudo-supermodular, then $f$ is weakly pseudo-supermodular. If $f$ is weakly pseudo-supermodular, then $f$ is meet-superextremal.
\begin{rk}In Definition \ref{df:supextremal}, 
	if $X$ is a chain, then  all conditions are satisfied except possibly \ref{it:latticesubextremal}.
\end{rk}
\begin{rk} If $x,y$ in Definition \ref{df:supextremal} \ref{it:latticesubextremal} are comparable and $C$ has no minimum, then the condition is always satisfied. (By symmetry, we may assume  $x\le y$. If $f(x)<f(y)$, then take $t=f(x)$ and $f(x\wedge y)\wedge f(x\vee y)\le f(x)\wedge t$. If $f(x)\ge f(y)$, then take any $t\in C$ with $t<f(y)$.  One has $f(x\wedge y)\vee f(x\vee y)\le f(x)\vee t$.)\end{rk}

\begin{lm}\label{lm:wksub}
	The following conditions are equivalent:
	\begin{enumerate}
		\item the map $f$ is  meet-subextremal;
		\item the  map $X^{\op}\to C$ induced by $f$ is  join-subextremal;
		\item the  map $X\to C^{\op}$ induced by $f$ is  meet-superextremal;
		\item the  map $X^{\op}\to C^{\op}$ induced by $f$ is  join-superextremal. 
	\end{enumerate}
\end{lm}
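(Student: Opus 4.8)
The plan is to prove all four equivalences by unwinding Definition \ref{df:supextremal} and keeping careful track of how the lattice operations transform under passage to an opposite poset. Two facts drive everything: in $X^{\op}$ the meet of $x,y$ is the join $x\vee y$ taken in $X$ and the join of $x,y$ is the meet $x\wedge y$ taken in $X$; and in $C^{\op}$ the relation $\le$ is reversed, so that $\wedge$ computed in $C^{\op}$ is $\vee$ computed in $C$ and vice versa. A third observation, worth checking once, is that the parenthetical ``resp.'' clauses defining meet-superextremality and join-superextremality in Definition \ref{df:supextremal} are obtained from the clauses defining meet-subextremality and join-subextremality precisely by reversing the order on $C$; this is exactly what makes ``meet-superextremal'' (resp. ``join-superextremal'') the honest $C^{\op}$-dual of ``meet-subextremal'' (resp. ``join-subextremal'').

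Granting these, I would isolate two elementary sub-facts. First, for any map $h$ from a lattice $L$ to a chain $D$, $h$ is meet-subextremal (resp. meet-superextremal) if and only if the induced map $L^{\op}\to D$ is join-subextremal (resp. join-superextremal): swapping $\wedge$ and $\vee$ in the domain turns the clause ``$f(x\wedge y)\le f(x)$ or $f(x\vee y)\le f(x)\vee f(y)$'' into ``$f(x\vee y)\le f(x)$ or $f(x\wedge y)\le f(x)\vee f(y)$'', which is verbatim the defining clause of join-subextremality, and the superextremal case is identical word for word. Second, $h:L\to D$ is meet-subextremal (resp. join-subextremal) if and only if $h:L\to D^{\op}$ is meet-superextremal (resp. join-superextremal): here one only replaces $\le$ by $\ge$ and exchanges $\wedge$ and $\vee$ in the codomain, which by the third observation above sends each subextremal clause to the corresponding superextremal one.

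With these two sub-facts in hand the lemma is immediate: the equivalences (1)$\Leftrightarrow$(2) and (3)$\Leftrightarrow$(4) are instances of the first sub-fact (applied with $L=X$, $D=C$ and with $L=X$, $D=C^{\op}$ respectively), while (1)$\Leftrightarrow$(3) is an instance of the second sub-fact (applied with $L=X$, $D=C$); together these link all four conditions. I do not anticipate a genuine obstacle: the entire content is bookkeeping about which operation becomes which under each $\op$, and the only place where a slip is possible is in verifying that the ``resp.'' versions in Definition \ref{df:supextremal} are literally the order-reversed statements, so I would carry out that check explicitly rather than leave it to the reader.
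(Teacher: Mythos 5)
Your proposal is correct and matches the paper's approach: the paper simply asserts the lemma ``by definition,'' and your argument is exactly that definitional unwinding, carried out explicitly via the two dualization sub-facts (swapping $\wedge$ and $\vee$ under $X^{\op}$, reversing $\le$ under $C^{\op}$). The bookkeeping in your two sub-facts checks out against Definition \ref{df:supextremal}, so nothing further is needed.
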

\begin{proof}
	By definition.
\end{proof}

\begin{lm}\label{lm:lat>qua}
	If  $f$ is lattice superextremal, then $f$ is quasisupermodular.
\end{lm}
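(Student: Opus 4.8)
The plan is to verify directly the two implications that define quasisupermodularity — the weak one, ``$f(x)\ge f(x\wedge y)$ implies $f(x\vee y)\ge f(y)$'', and the strict one, ``$f(x)> f(x\wedge y)$ implies $f(x\vee y)> f(y)$'' — each by contradiction, feeding a suitably ordered pair into the lattice-superextremal hypothesis. Throughout I abbreviate $a=f(x\wedge y)$, $b=f(x)$, $c=f(y)$, $d=f(x\vee y)$, all elements of the chain $C$, so that $a\vee d$, $b\wedge t$, and so on are just maxima and minima in $C$.

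For the strict implication I would assume $b>a$ and, aiming at a contradiction, $d\le c$, then apply lattice superextremality to the pair $(x,y)$. It produces a witness $t\in C$ with $t>c$, hence also $t>d$, satisfying $a\vee d\ge b\vee t$ or $a\wedge d\ge b\wedge t$. In the first alternative, $\max(a,d)\ge\max(b,t)\ge t>d$ forces $\max(a,d)=a$, whence $a\ge b$, contradicting $b>a$. In the second alternative, $\min(a,d)\le a<b$ forces $\min(b,t)<b$, hence $\min(b,t)=t$; then $\min(a,d)\ge t>d$ contradicts $\min(a,d)\le d$. So the strict implication holds.

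For the weak implication I would assume $b\ge a$ and, aiming at a contradiction, $d<c$, and this time apply lattice superextremality to the pair $(y,x)$; it matters that the lattice-superextremal condition is asymmetric in its two arguments and that this is the order that works. It produces a witness $s\in C$ with $s>b$, hence $s>a$, satisfying $a\vee d\ge c\vee s$ or $a\wedge d\ge c\wedge s$. In the first alternative, $\max(a,d)\ge\max(c,s)\ge s>a$ forces $\max(a,d)=d$, whence $d\ge c$, contradicting $d<c$. In the second alternative, $\min(a,d)\le a<s$ forces $\min(c,s)<s$, hence $\min(c,s)=c$; then $\min(a,d)\ge c$ contradicts $\min(a,d)\le d<c$. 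So the weak implication holds, and $f$ is quasisupermodular.

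The only point needing care is the choice of which ordered pair to supply to the lattice-superextremal hypothesis in each of the two implications; with the wrong choice the argument stalls (lattice superextremality applied to $(y,x)$ alone does not deliver the strict implication). Once the correct pair is used, the mechanism is uniform: the witness strictly exceeds $f(y)$ (resp.\ $f(x)$), so by the contradiction hypothesis it strictly exceeds $f(x\vee y)$ as well, which kills the ``$\vee$'' alternative; and its comparison with $f(x)$ versus $f(x\wedge y)$ kills the ``$\wedge$'' alternative. No topological or completeness assumptions on $X$ or $C$ enter.
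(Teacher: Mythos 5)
Your proof is correct, and it uses essentially the same mechanism as the paper: negate quasisupermodularity and feed a suitably ordered pair into the lattice-superextremal hypothesis, with the swap to $(y,x)$ being the key move for the weak implication. Your organization is in fact slightly tighter than the paper's, which in the analogous situation (its cases with $f(x)\ge f(x\wedge y)$ and $f(x\vee y)<f(y)$) first applies the hypothesis to $(x,y)$ to derive $f(x)=f(x\wedge y)$ before invoking it again for $(y,x)$, whereas you dispose of that branch with a single application to $(y,x)$.
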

\begin{proof}Assume that $f$ is not quasisupermodular. Then there exist $x,y\in X$, such that at least one of the following cases holds: \begin{gather}
		f(x)\ge f(x\wedge y),\quad f(x\vee y)<f(y);\label{eq:A}\\
		f(x)> f(x\wedge y),\quad f(x\vee y)\le f(y).\label{eq:B} 
	\end{gather}
	Since $f$ is lattice superextremal, there is $t>f(y)$ such that at least one of the following cases holds:
	\begin{gather}
		f(x \wedge y)\vee f(x\vee y)\ge f(x)\vee t\text{ or }\label{eq:1}\\
		f(x\wedge y)\wedge f(x\vee y)\ge f(x)\wedge t.\label{eq:2}
	\end{gather}
	\begin{enumerate}
		\item Assume (\ref{eq:A}) and (\ref{eq:1}). We have $f(x \wedge y)\vee f(x\vee y)\ge t>f(y)>f(x\vee y)$, so $f(x \wedge y)>f(x\vee y)$. Thus, $f(x)\ge f(x \wedge y)\ge f(x)\vee t\ge f(x)$. It implies $f(x)= f(x \wedge y)\ge t>f(y)>f(x\vee y)$. For every $t'>f(x)$ in $C$,  one has $f(y)\wedge t'=f(y)>f(x\vee y)\ge f(x\wedge y)\wedge f(x\vee y)$ and $f(y)\vee t'=t'>f(x)=f(x \wedge y)\vee f(x\vee y)$. This contradicts that $f$ is  lattice superextremal.
		\item  Assume (\ref{eq:A}) and (\ref{eq:2}). We have $t>f(y)>f(x\vee y)\ge f(x\wedge y)\wedge f(x\vee y)\ge f(x)\wedge t$. Thus, $t>f(x)$ and $f(x\wedge y)\ge f(x\wedge y)\wedge f(x\vee y)\ge f(x)\ge f(x\wedge y)$. Hence $f(x)=f(x\wedge y)\le f(x\vee y)<f(y)$. For every $t'>f(x)$ in $C$,  one has $f(y)\vee t'\ge f(y)>f(x\vee y)=f(x \wedge y)\vee f(x\vee y)$ and $f(y)\wedge t'>f(x)=f(x\wedge y)\wedge f(x\vee y)$. This contradicts that $f$ is  lattice superextremal.
		\item Assume (\ref{eq:B}) and (\ref{eq:1}). Then $f(x \wedge y)\vee f(x\vee y)\ge f(x)\vee t\ge f(x)>f(x\wedge y)$. Thus, $f(x\vee y)>f(x \wedge y)$ and $f(x)\vee t\ge t>f(y)\ge f(x\vee y)\ge f(x)> f(x\wedge y)$. This contradicts (\ref{eq:1}).
		\item Assume (\ref{eq:B}) and (\ref{eq:2}). Then $f(x)>f(x\wedge y)\ge 	f(x\wedge y)\wedge f(x\vee y)\ge f(x)\wedge t$ and $t>f(y)\ge f(x\vee y)\ge 	f(x\wedge y)\wedge f(x\vee y)\ge f(x)\wedge t$. Thus, $f(x)\wedge t> f(x)\wedge t$. A contradiction.
	\end{enumerate} 
\end{proof}
Lemma \ref{lm:qua>lat} shows that when $C=\R$,  quasisupermodularity coincides with lattice superextremality.
\begin{lm}\label{lm:qua>lat}
	Suppose that $C$ has no maximum, then $f$ is quasisupermodular if and only if $f$ is lattice superextremal.
\end{lm}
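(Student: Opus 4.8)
Since lattice superextremal functions are quasisupermodular by Lemma \ref{lm:lat>qua}, it remains to prove the converse, and this is where the hypothesis that $C$ has no maximum will be needed. The plan is as follows. Assume $f$ is quasisupermodular and fix $x,y\in X$; abbreviate $p=f(x\wedge y)$, $q=f(x)$, $r=f(y)$, $s=f(x\vee y)$. Applying quasisupermodularity both to the pair $(x,y)$ and to the pair $(y,x)$ supplies four implications to work with: $q\ge p\Rightarrow s\ge r$, $q>p\Rightarrow s>r$, $r\ge p\Rightarrow s\ge q$, and $r>p\Rightarrow s>q$. The goal is to produce some $t>r$ in $C$ with $p\vee s\ge q\vee t$ or $p\wedge s\ge q\wedge t$.

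First I would reduce the problem to verifying just one of three configurations: (I) $p\vee s\ge q$ and $p\vee s>r$; (II) $p\wedge s\ge q$; (III) $p\wedge s>r$. Under (I) the choice $t=p\vee s$ works, under (III) the choice $t=p\wedge s$ works, and under (II) any $t>r$ works; in the last case such a $t$ exists precisely because $C$ has no maximum, and this is the only place the hypothesis on $C$ enters.

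Then I would show that (I), (II), (III) cannot all fail. Their simultaneous failure yields $p\wedge s<q$, $p\wedge s\le r$, and ``$p\vee s<q$ or $p\vee s\le r$''. Since $p\wedge s<q$, either $p<q$ or $s<q$, and I would split into these two cases. If $p<q$: then $q>p\Rightarrow s>r$ gives $s>r$, so $p\wedge s\le r<s$ forces $p\le r$, and $r\ge p\Rightarrow s\ge q$ then gives $s\ge q$; hence $p\vee s=s$ satisfies both $p\vee s\ge q$ and $p\vee s>r$, i.e.\ (I) holds, contradicting the negation. If instead $p\ge q$ and $s<q$: the contrapositive of $r>p\Rightarrow s>q$ gives $r\le p$; since $p\vee s=p$ and (I) is assumed to fail, one gets $p\le r$, hence $p=r$, and then $r\ge p\Rightarrow s\ge q$ contradicts $s<q$. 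Both cases yield contradictions, so one of (I)--(III) holds, and the converse follows.

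I expect the main obstacle to be organizational rather than conceptual: settling on the right three target configurations and keeping track of which of the four quasisupermodularity implications (and which contrapositive) dispatches each branch. Conceptually the argument runs opposite to that of Lemma \ref{lm:lat>qua}: there the threshold $t$ was handed to us and we extracted a contradiction, whereas here $t$ must be constructed, so the same implications are chained in the reverse direction.
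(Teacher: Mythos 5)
Your argument is correct. Each of your configurations (I)--(III) does furnish an admissible threshold: with $p=f(x\wedge y)$, $q=f(x)$, $r=f(y)$, $s=f(x\vee y)$, the choices $t=p\vee s$, any $t>r$, and $t=p\wedge s$ respectively satisfy the definition of lattice superextremality, and your two-case analysis (splitting on $p<q$ versus $p\ge q$) correctly uses the four quasisupermodularity implications for the pairs $(x,y)$ and $(y,x)$, together with the chain structure of $C$, to show the three configurations cannot all fail; you also locate correctly the single place where the no-maximum hypothesis is needed, namely to produce some $t>r$ in case (II). The route differs from the paper's in organization rather than substance: the paper proves the converse by contradiction, fixing a pair $x,y$ for which every $t>f(y)$ satisfies both $f(x\wedge y)\vee f(x\vee y)<f(x)\vee t$ and $f(x\wedge y)\wedge f(x\vee y)<f(x)\wedge t$, splitting instead on whether $f(x\wedge y)\ge f(x\vee y)$, and instantiating the failing condition at successively chosen values $t_1,t_2,t_3$ (there the no-maximum hypothesis only supplies $t_1$). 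Your direct version makes the witness $t$ explicit and isolates exactly where the hypothesis on $C$ enters, at the cost of the preliminary reduction to (I)--(III); the paper's contradiction argument is stylistically parallel to its proof of Lemma \ref{lm:lat>qua} and chains the same implications in the opposite direction, so the two proofs are interchangeable in content.
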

\begin{proof}By Lemma \ref{lm:lat>qua}, it remains to prove that  quasisupermodularity implies  lattice superextremality. 
	
	Assume the contrary that $f$ is not lattice superextremal. Then there exist $x,y\in X$ such that for every $t>f(y)$, one has $f(x \wedge y)\vee f(x\vee y)< f(x)\vee t$ and $f(x\wedge y)\wedge f(x\vee y)< f(x)\wedge t$. Since $f(y)$ is not the maximum of $C$, there is $t_1\in C$ with $t_1>f(y)$.   There are exactly two cases.
	\begin{enumerate}\item $f(x\wedge y)\ge f(x\vee y)$: Then $f(x\vee y)< f(x)\wedge t_1\le f(x)$. As $f$ is quasisupermodular, one has $f(x\wedge y)>f(y)$. Take $t_2=f(x\wedge y)$. Then $f(x\wedge y)<f(x)\vee t_2=f(x)$. As $f$ is quasisupermodular, one has $f(y)<f(x\vee y)$. Take $t_3=f(x\vee y)$. Then   $f(x\vee y)< f(x)\wedge t_3\le t_3=f(x\vee y)$, which is a contradiction.
		
		\item $f(x\wedge y)< f(x\vee y)$: Then $f(x\wedge y)< f(x)\wedge t\le f(x)$. As $f$ is quasisupermodular, one has $f(x\vee y)>f(y)$. Take $t_4=f(x\vee y)$. Then $f(x\vee y)<f(x)\vee t_4=f(x)$. As $f$ is quasisupermodular, one has $f(y)<f(x\wedge y)$. Take $t_5=f(x\wedge y)$. Then $f(x\wedge y)< f(x)\wedge t_5\le t_5=f(x\wedge y)$, which is a contradiction.\end{enumerate}
\end{proof}
\begin{rk} It is  stated in \cite[Ch.~6, Lem.~34]{veinott1992lattice} that  if $f$ is quasisubmodular, then it is lattice subextremal. This statement seems questionable if $C$ has a minimum element $u$. For example, the constant map $f:X\to C, \quad x\mapsto u$ is quasisubmodular but not lattice subextremal.
\end{rk}
\begin{lm}\label{lm:sub=w+v}
	The following conditions are equivalent:
	\begin{enumerate}
		\item \label{it:sub}$f$ is subextremal;
		\item  \label{it:wkalm}$f$ is  meet-subextremal and  join-subextremal.
	\end{enumerate}
\end{lm}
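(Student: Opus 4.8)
The plan is to prove the two implications $(\ref{it:sub})\Rightarrow(\ref{it:wkalm})$ and $(\ref{it:wkalm})\Rightarrow(\ref{it:sub})$ by elementary case analysis on the relative order of the four elements $f(x),f(y),f(x\wedge y),f(x\vee y)$ of the chain $C$, using repeatedly that in a chain the binary meet (resp.\ join) of two elements equals one of them. The one structural observation I will need is that meet-subextremality and join-subextremality are quantified over \emph{all} ordered pairs, so when verifying them I am free to use whichever orientation of a given pair is convenient.

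For $(\ref{it:sub})\Rightarrow(\ref{it:wkalm})$, I would fix an arbitrary pair $x,y\in X$ and split on the two disjuncts in the definition of subextremality. If (\ref{eq:subext1}) holds, then $f(x\vee y)\le f(x\wedge y)\vee f(x\vee y)\le f(x)\vee f(y)$ and $f(x\wedge y)\le f(x\wedge y)\vee f(x\vee y)\le f(x)\vee f(y)$; the former is the second disjunct of meet-subextremality at $(x,y)$ and the latter the first disjunct of join-subextremality at $(x,y)$. If instead (\ref{eq:subext2}) holds, then $f(x\wedge y)\wedge f(x\vee y)\le f(x)\wedge f(y)$, and since $C$ is a chain this meet is either $f(x\wedge y)$ or $f(x\vee y)$: in the first case $f(x\wedge y)\le f(x)\wedge f(y)$ supplies the first disjunct of both conditions at $(x,y)$, in the second case $f(x\vee y)\le f(x)\wedge f(y)$ supplies the second disjunct of both. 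As $x,y$ were arbitrary, $f$ is meet-subextremal and join-subextremal.

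For $(\ref{it:wkalm})\Rightarrow(\ref{it:sub})$, I would argue by contradiction. If $f$ is not subextremal, then for some pair both (\ref{eq:subext1}) and (\ref{eq:subext2}) fail; since subextremality is symmetric in $x$ and $y$, relabel so that $f(x)\le f(y)$, whence $f(x)\vee f(y)=f(y)$ and $f(x)\wedge f(y)=f(x)$. Failure of (\ref{eq:subext1}) now reads $f(x\wedge y)>f(y)$ or $f(x\vee y)>f(y)$, and failure of (\ref{eq:subext2}) reads $f(x\wedge y)>f(x)$ and $f(x\vee y)>f(x)$. In the first case, $f(x\wedge y)>f(y)=f(x)\vee f(y)$ together with $f(x\vee y)>f(x)$ falsifies both disjuncts of join-subextremality at $(x,y)$; in the second case, $f(x\vee y)>f(y)=f(x)\vee f(y)$ together with $f(x\wedge y)>f(x)$ falsifies both disjuncts of meet-subextremality at $(x,y)$. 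Either way, $(\ref{it:wkalm})$ is contradicted.

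I do not expect a genuine obstacle: the argument is bookkeeping. The two points that need care are invoking meet- and join-subextremality at the \emph{suitably oriented} pair in the converse direction (legitimate precisely because both properties range over all pairs), and using that $C$ is a chain --- both in the relabelling $f(x)\le f(y)$ and in resolving $f(x\wedge y)\wedge f(x\vee y)$ --- so that the disjunctions collapse to single inequalities. As a consistency check, Lemma \ref{lm:wksub} presents meet- and join-subextremality as order-reversal duals, while (\ref{eq:subext1})--(\ref{eq:subext2}) is self-dual under reversing the order of $X$, in agreement with the symmetry of the claimed equivalence.
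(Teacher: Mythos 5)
Your proof is correct and takes essentially the same route as the paper's: an elementary case analysis over the chain $C$, with the symmetry-based normalization $f(x)\le f(y)$ and a contradiction argument for the converse direction. The only cosmetic difference is that the paper deduces join-subextremality in the forward direction by passing to $X^{\op}$ via Lemma \ref{lm:wksub}, whereas you verify both halves directly.
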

\begin{proof}
	\begin{itemize}
		\item Assume Condition \ref{it:sub}. For any $x,y\in X$ with $f(x\vee y)> f(x)\vee f(y)$,  (\ref{eq:subext1}) does not hold. Consequently, (\ref{eq:subext2}) is true. Since $f(x\vee y)> f(x)\wedge f(y)$, we have $f(x\wedge y)\le f(x)\wedge f(y)\le f(x)$. This proves that $f$ is  meet-subextremal.
		
		The induced map $X^{\op}\to C$ is subextremal, hence  meet-subextremal. By Lemma \ref{lm:wksub}, the map $f:X\to C$ is  join-subextremal. 	
		
		\item Suppose that $f$ verifies Condition \ref{it:wkalm} but not Condition \ref{it:sub}. Then there exist $x,y\in X$ with $f(x\vee y)\vee f(x\wedge y)>f(x)\vee f(y)$ and $f(x\vee y)\wedge f(x\wedge y)>f(x)\wedge f(y)$. By symmetry, we may assume $f(x)\le f(y)$. Then $f(x)<f(x\vee y)$.  We have $f(x)<f(x\wedge y)\labelrel\le{myeq:usejoin}f(y)<f(x\vee y)\vee f(x\wedge y)=f(x\vee y)$, where \eqref{myeq:usejoin} uses join-subextremality. However, this contradicts meet-subextremality.
		\end{itemize}
	\end{proof}
Corollary \ref{cor:qs>supextremal} shows that superextremal maps are generalizations of quasisupermodular maps.
\begin{cor}\label{cor:qs>supextremal}
	If $f$ is weakly quasisupermodular, then it is superextremal.
\end{cor}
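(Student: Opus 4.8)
My plan is to prove the contrapositive: assuming $f$ is not superextremal, I exhibit a pair on which weak quasisupermodularity fails. Negating the disjunction in the definition of superextremality, there are $x,y\in X$ with
\[
f(x\wedge y)\vee f(x\vee y) < f(x)\vee f(y)
\quad\text{and}\quad
f(x\wedge y)\wedge f(x\vee y) < f(x)\wedge f(y).
\]
Because $C$ is a chain, the joins and meets here are just $\max$ and $\min$; and because $x\wedge y$ and $x\vee y$ are unchanged when $x$ and $y$ are interchanged, both displayed inequalities are symmetric in $x,y$, so I may assume $f(x)\le f(y)$. Then the first inequality forces $f(x\wedge y)<f(y)$ and $f(x\vee y)<f(y)$, while the second forces $\min\{f(x\wedge y),f(x\vee y)\}<f(x)$.

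The decisive step is to feed weak quasisupermodularity not the pair $(x,y)$ but the pair $(y,x)$ — that is, to take $y$ in the role of the first argument. Its first clause then reads ``$f(y\wedge x)<f(y)$ implies $f(y)\wedge f(x)<f(y\vee x)$''; since $f(x\wedge y)=f(y\wedge x)<f(y)$ and $f(y)\wedge f(x)=f(x)$, this yields $f(x)<f(x\vee y)$. Its second clause reads ``$f(y\vee x)<f(y)$ implies $f(y)\wedge f(x)<f(y\wedge x)$''; since $f(x\vee y)<f(y)$, this yields $f(x)<f(x\wedge y)$. Together these give $f(x)<\min\{f(x\wedge y),f(x\vee y)\}$, contradicting $\min\{f(x\wedge y),f(x\vee y)\}<f(x)$. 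Hence $f$ is superextremal.

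I do not expect a real obstacle here; the only thing to get right is the orientation — using the pair $(y,x)$ with $f(x)\le f(y)$, so that the ``free'' strict inequalities $f(x\wedge y)<f(y)$ and $f(x\vee y)<f(y)$ coming from the failure of superextremality trigger \emph{both} clauses of weak quasisupermodularity at once (feeding in $(x,y)$ instead only produces statements consistent with what is already known). An alternative, slightly more structural route would be to invoke the $C^{\op}$-dual of Lemma \ref{lm:sub=w+v}, which splits superextremality into meet-superextremality together with join-superextremality, and then to check that the first clause of weak quasisupermodularity yields meet-superextremality and the second yields join-superextremality; but the direct argument above is shorter.
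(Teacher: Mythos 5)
Your argument is correct: negating superextremality over a chain, symmetrizing so that $f(x)\le f(y)$, and then applying both clauses of weak quasisupermodularity to the ordered pair $(y,x)$ does produce the contradiction $f(x)<f(x\wedge y)\wedge f(x\vee y)<f(x)$, and the orientation issue you flag is exactly the point that makes the argument work. The paper takes the route you only sketch as an alternative: it observes that the first clause of weak quasisupermodularity gives meet-superextremality and the second gives join-superextremality (each in one line, by contraposing the relevant disjunct), and then invokes the dual of Lemma \ref{lm:sub=w+v} to assemble superextremality. The two proofs are close in substance, since the nontrivial half of Lemma \ref{lm:sub=w+v} is itself proved by the same ``WLOG $f(x)\le f(y)$ and contradict'' device; your direct proof essentially inlines that argument, which makes it self-contained and slightly shorter. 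What the paper's factorization buys is reuse and bookkeeping: meet- and join-superextremality are the notions actually used later (e.g.\ in Corollary \ref{cor:41} and Theorems \ref{thm:existence}--\ref{thm:structure}), and the decomposition records precisely which clause of weak quasisupermodularity is responsible for which half, whereas your direct route duplicates work already done in Lemma \ref{lm:sub=w+v}.
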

\begin{proof}
	For any $x,y\in X$, if $f(x\wedge y)<f(x)$, then $f(x\vee y)>f(x)\wedge f(y)$, so $f$ is meet-superextremal. Similarly, if $f(x\vee y)<f(x)$, then $f(x)\wedge f(y)<f(x\wedge y)$, so $f$ is join-superextremal. By dual of Lemma \ref{lm:sub=w+v}, $f$ is superextremal.
\end{proof}The converse of Corollary \ref{cor:qs>supextremal} is not true, as Example \ref{eg:01ab} \ref{it:superextremalnotwkqs} shows. We provide a series of examples to display the relation between different concepts of quasisupermodularity in the literature.
\begin{eg}\label{eg:01ab}Let $X=\{0,1,a,b\}$, where $a,b$ are incomparable, $a\wedge b=0$ and $a\vee b=1$. Then $X$ is a lattice.	
	\begin{enumerate}
		\item \label{it:wsubnotv} Define $f:X\to \R$ by $f(0)=0$, $f(a)=2$, $f(b)=f(1)=1$. Then $f$ is  meet-superextremal. However, $f(a\wedge b)>f(a)\vee f(b)$ and $f(a\vee b)>f(a)$, so $f$ is not join-superextremal.
		By Corollary \ref{cor:qs>supextremal}, it is not quasisupermodular.
		\item\label{it:superextremalnotwkqs} Define $g:X\to \R$ by $g(1)=g(a)=g(b)=1$, $g(0)=0$. Then $g$ is superextremal. As $g(a\wedge b)< g(a)$ and $g(a)\wedge g(b)\ge g(a\vee b)$, the function $g$ is not weakly quasisupermodular.
		\item Define $h:X\to \R$ by $h(0)=h(a)=2$, $h(b)=1$, $h(1)=0$. Then $h$ is weakly quasisupermodular but not weakly pseudo-supermodular. Indeed, $h(1)<h(a)\wedge h(b)$, but $h(0)=h(a)$.
		\item Define $u:X\to \R$ by $u(0)=0$ and $u(a)=u(b)=u(1)=1$. Then $u$ is weakly pseudo-supermodular but neither pseudo-supermodular nor weakly quasisupermodular. Indeed, one has $u(0)<u(a)$ but $u(1)=u(a)\wedge u(b)$. One has $u(a)\vee u(b)>u(0)$ but $u(1)=u(a)\wedge u(b)$.
		\item\label{it:wkqsnotqs} Define $v:X\to \R$ by $v(0)=v(a)=2$, $v(b)=1$ and $v(1)=0$. Then $v$ is weakly quasisupermodular, but not weakly pseudo-supermodular. Indeed, one has $v(a)\ge v(a\wedge b)$ and $v(a\vee b)<v(a)\wedge v(b)$.
		\item Define $w:X\to \R$ by $w(0)=0$, $w(b)=1$, $w(1)=2$ and $w(a)=3$. Then $w$ is pseudo-supermodular but neither weakly quasisupermodular nor join-superextremal.
\end{enumerate}\end{eg}

\section{Existence of maximum}\label{sec:whymaxexists}
In a normal form game, each player's optimal response to the strategies chosen by the other players is to maximize their respective payoff functions. Milgrom and Shannon \cite[Thm.~A4]{milgrom1994monotone} prove that the maximizers of a quasisupermodular, order-theoretically upper semicontinuous function on a complete lattice attains is a nonempty lattice. It  is an analog of the  topological fact (see, e,g., \cite[p.991]{ceder1963compactness}) that  an upper semicontinuous function on a compact space attains its maximum. In Theorem \ref{thm:41}, we  give a purely order-theoretical sufficient condition for the existence of minimum. Its dual statement about maximum is similar to the Milgrom-Shannon theorem. Example \ref{eg:01ab} \ref{it:superextremalnotwkqs} satisfies the hypotheses of the dual statement, but the set of maximizers $\{a,b,1\}$ is not a lattice. Still, it is a quasisublattice.
\begin{df}
\begin{itemize}
\item \cite[p.12]{licalzi1992subextremal} Let $X$ be a lattice. A subset $S$  of $X$ is  a \emph{quasisublattice}, if for any $x,y\in S$, at least one of $x\wedge y$ and  $x\vee y$ is in $S$.
\item \cite[p.542]{kukushkin2013increasing} A poset $P$ is \emph{chain-complete upwards} (resp. \emph{downwards}) if $\sup_PC$ (resp. $\inf_PC$) exists for every nonempty chain $C\subset P$.
\end{itemize}	
\end{df}

In Theorem \ref{thm:41}, we generalize \cite[Ch.~6, Theorem 41]{veinott1992lattice}, which assumes furthermore that the chain $C$ is complete and for every $t\in C$, the set $[f\le t]$ is \emph{chain-subcomplete} in the lattice $X$. 
This generalization is proper, as shown by Example \ref{eg:41}.
\begin{thm}\label{thm:41}
	Let $f:X\to C$ be a  meet-subextremal  map from  a nonempty lattice to  a chain. Suppose that for every $t\in C$, the set $[f\le t]$  is chain-complete downwards  and admits maximal elements. Then $\argmin_X f$  is a nonempty  quasisublattice of $X$.
\end{thm}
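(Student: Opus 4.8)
The plan is to split the statement into ``$\argmin_X f$ is a quasisublattice'' and ``$\argmin_X f\neq\emptyset$''. The first half is immediate from meet-subextremality: if $x,y\in[f\le t]$ then applying the defining implication to the pair $(x,y)$ yields either $f(x\wedge y)\le f(x)\le t$, so $x\wedge y\in[f\le t]$, or $f(x\vee y)\le f(x)\vee f(y)\le t$, so $x\vee y\in[f\le t]$; hence every lower level set of $f$ is a quasisublattice of $X$, and in particular so is $[f\le m]=\argmin_X f$ once we know $m:=\min f(X)$ exists. So everything reduces to showing that $f$ attains its minimum. I would note in passing that the hypothesis ``$[f\le t]$ admits maximal elements for every $t\in C$'' already forces $[f\le t]\neq\emptyset$ for every $t\in C$, so that if $\inf_C f(X)$ happened to exist in $C$ it would be a value of $f$ and hence a minimum; but $C$ is not assumed complete, so the real work is attaining the minimum without that shortcut.

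The mechanism is a descent lemma: \emph{if $m\in X$ is a maximal element of $[f\le f(m)]$ and $f(y)<f(m)$ for some $y$, then $m\wedge y<m$ and $f(m\wedge y)<f(m)$.} This follows by applying meet-subextremality to the ordered pair $(y,m)$: either $f(m\wedge y)\le f(y)<f(m)$ directly, or $f(m\vee y)\le f(y)\vee f(m)=f(m)$, and then $m\vee y\in[f\le f(m)]$ lies above the maximal element $m$, forcing $m\vee y=m$, i.e.\ $y\le m$, so $m\wedge y=y$ and again $f(m\wedge y)<f(m)$. Moreover, for any $x_0\in X$, a maximal element of $[f\le f(x_0)]$ is automatically a maximal element of $[f\le f(m)]$, so elements $m$ to which the lemma applies are always available.

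Now attainment is proved by a transfinite descent (equivalently, a Zorn argument on a poset of ``descending towers of level-maximal elements''). Fix $x_0$, set $t_0:=f(x_0)$, and let $m_0$ be a maximal element of $S:=[f\le t_0]$. Recursively build a $\le$-decreasing transfinite sequence $(m_\alpha)$ with $f(m_\alpha)$ strictly decreasing and each $m_\alpha$ a maximal element of $[f\le f(m_\alpha)]$. At a successor step, if $f(m_\alpha)\ne\min f(X)$, the descent lemma yields $z_\alpha=m_\alpha\wedge y_\alpha<m_\alpha$ with $f(z_\alpha)<f(m_\alpha)$; one then takes $m_{\alpha+1}$ to be a maximal element of $[f\le f(z_\alpha)]$ lying below $m_\alpha$, the existence of such a choice following by combining meet-subextremality with the maximality of $m_\alpha$ (a maximal element $p$ of $[f\le f(z_\alpha)]$ either already satisfies $p\le m_\alpha$, or is replaced by $p\wedge m_\alpha$, the argument being repeated inside the sublattice ${\downarrow}m_\alpha$). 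At a limit ordinal $\mu$, the chain $(m_\alpha)_{\alpha<\mu}$ lies inside $S$, which is chain-complete downwards, so it has an infimum there; the task is to manufacture from that infimum a new term $m_\mu$ that is still below every earlier $m_\alpha$ \emph{and} whose $f$-value is below every earlier $f(m_\alpha)$ --- this is done by intersecting the relevant level sets with the principal ideals ${\downarrow}m_\alpha$ and again using meet-subextremality and the maximality of the $m_\alpha$ to bound the $f$-value. Since $X$ is a set, $f(m_\alpha)$ cannot strictly decrease through all ordinals, so the recursion terminates, and it can terminate only at a successor with $f(m_\alpha)=\min f(X)$. This gives $\argmin_X f\neq\emptyset$, completing the proof by the first paragraph.

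The hard part is the limit step. The obstruction is that $f$ need not be order-preserving along chains, so the infimum of a descending chain $(m_\alpha)$, even when it exists inside $S$, comes with no automatic control on its $f$-value relative to the $f(m_\alpha)$; overcoming this is precisely where ``chain-complete downwards'' (not merely ``has minimal elements'', which Zorn applied to $S^{\op}$ would already supply) is used, together with the fact that each $m_\alpha$ was chosen \emph{maximal} in its level set, so that meet-subextremality can convert order information into $f$-value information. The successor step hides a smaller version of the same difficulty --- choosing $m_{\alpha+1}$ level-maximal \emph{and} below $m_\alpha$ --- which is handled in the same way, using the quasisublattice structure of the level sets.
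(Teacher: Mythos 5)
Your reduction to attainment of the minimum, the quasisublattice argument, and the descent lemma itself are all correct, but the transfinite recursion has a genuine gap exactly where you locate ``the hard part'': the limit step is described, not proved. At a limit ordinal $\mu$ you must produce $m_\mu$ lying below every $m_\alpha$ with $f(m_\mu)$ below every $t_\alpha:=f(m_\alpha)$, i.e.\ you must show that $\bigcap_{\alpha<\mu}[f\le t_\alpha]$ meets the set of lower bounds of the chain $(m_\alpha)_{\alpha<\mu}$. None of the tools you invoke delivers this: chain-completeness downwards controls each level set separately, and a nested transfinite family of nonempty, chain-complete-downwards sets can have empty intersection (e.g.\ the sets $[1-1/n,1)$ inside the lattice $[0,1)$), so nestedness alone is not enough; meet-subextremality is vacuous on comparable pairs, so it gives no control on $f$ at $\inf\{m_\alpha\}$ taken in $[f\le t_0]$; and maximality of $m_\alpha$ in its level set constrains only elements \emph{above} $m_\alpha$, hence says nothing about that infimum either. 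Indeed, nonemptiness of $\bigcap_{\alpha}[f\le t_\alpha]$ is essentially the attainment statement itself restricted along your construction, so ``intersect the level sets with the principal ideals $\{x\le m_\alpha\}$ and bound the $f$-value'' restates the problem rather than solving it. The successor step hides a smaller instance of the same defect: the hypotheses provide maximal elements of $[f\le t]$ in $X$, not of $[f\le t]\cap\{x\le m_\alpha\}$ (Zorn would need chain-boundedness above, which is not assumed), and even granted such an element, maximality relative to $\{x\le m_\alpha\}$ is too weak for your descent lemma, whose second case ($f(m\vee y)\le f(m)$, forcing $m\vee y=m$) needs maximality in all of $X$ because $y$ ranges over all of $X$ and $m\vee y$ can escape $\{x\le m_\alpha\}$. (In your own reduction, the case $f(p\vee m_\alpha)\le f(m_\alpha)$ is fine since it forces $p\le m_\alpha$; the problematic case is the other one, where you are left with $p\wedge m_\alpha$, which is below $m_\alpha$ but carries no maximality.)

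For comparison, the paper avoids any descent on points. It considers the correspondence $F:f(X)\to 2^X$, $F(t)=[f\le t]$, shows meet-subextremality makes $F$ weakly ascending, applies (the dual of) Kukushkin's increasing-selection theorem to get a nonempty poset $\cP$ of increasing selections, verifies Zorn's hypothesis in $\cP$ pointwise from chain-completeness downwards of each $F(t)$, and then shows a minimal increasing selection $r$ must be constant by comparing it with $\tau\mapsto\inf_{F(\tau)}\{r(t)\}_{t\le\tau}$; the constant value is a global minimizer. That Zorn-on-selections step is precisely the stabilization device your recursion lacks at limit stages; without an analogue of it, your argument is incomplete.
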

A correspondence $F:X\to 2^Y$ from a poset $X$ to a lattice $Y$ is called \emph{weakly ascending}, if for any $x<x'$ in $X$, every $y\in F(x)$ and every $y'\in F(x')$, one has either $y\vee y'\in F(x')$ or $y\wedge y'\in F(x)$.
\begin{proof}Define a correspondence $F:f(X)\to 2^X, \quad F(t)=[f\le t]$. By assumption, for every $t\in f(X)$, the value $F(t)$ is chain-complete downwards  and admits maximal elements.
	
	We prove that $F$ is weakly ascending. For this, consider any $t<\tau$ in $f(X)$, every $s\in F(t)$ and every $\sigma\in F(\tau)$. As $f$ is meet-subextremal, one has either $f(s\wedge \sigma)\le f(s)\le t$ or $f(s\vee \sigma)\le f(\sigma)\vee f(s)\le \tau$. Then either $s\wedge \sigma\in F(t)$ or $s\vee \sigma\in F(\tau)$. 
	
	Let $\cP$ be the set of increasing selections of $F$. By a dual of \cite[Theorem 2.2]{kukushkin2013increasing}, as $F$ is weakly ascending, $\cP$ is nonempty. For $r,r'\in\cP$, by  $r\le r'$ we mean  $r(t)\le r'(t)$ for all $t\in f(X)$. Then $\cP$ is a poset under the relation $\le$.
	
	We show that every nonempty chain  in $\cP$ has a lower bound.  Let  $\{r^i\}_{i\in I}$ be such a chain.  For every $t\in f(X)$, the subset $\{r^i(t)\}_{i\in I}$ of $F(t)$ is a chain. Because $F(t)$ is chain-complete downwards, $r(t):=\inf_{F(t)}\{r^i(t)\}_{i\in I}$ exists. For any $t\le t'$ in $f(X)$ and every $i\in I$, one has $r(t)\le r^i(t)\le r^i(t')$. From $r(t)\in F(t)\subset F(t')$, one has $r(t)\le \inf_{F(t')}\{r^i(t')\}_{i\in I}=r(t')$. Therefore, $r\in \cP$ is a lower bound on the chain $\{r^i\}_i$. 
	
	By Zorn's lemma, $\cP$  has a minimal element $r$. For any $t\le \tau$ in $f(X)$,  one has $r(t)\in F(t)\subset F(\tau)$. Then $\{r(t)\}_{t\le \tau,t\in f(X)}$ is a nonempty  chain in $F(\tau)$. Since $F(\tau)$ is chain-complete downwards, the element $m(\tau):=\inf_{F(\tau)}\{r(t)\}_{t\le \tau,t\in f(X)}$ exists. 
	
	For any $\tau\le \tau'$ in $f(X)$, by $F(\tau)\subset F(\tau')$, one has \[m(\tau)=\inf_{F(\tau)}\{r(t)\}_{t\le \tau,t\in f(X)}\le \inf_{F(\tau')}\{r(t)\}_{t\le \tau,t\in f(X)}=\inf_{F(\tau')}\{r(t)\}_{t\le \tau',t\in f(X)}=m(\tau').\] Thus,  the map $m:f(X)\to X$ is increasing and $m\in\cP$.
	
	For every $t\in f(X)$, one has $m(t)\le r(t)$. Consequently, $m\le r$ in $\cP$. Since $r$ is minimal in $\cP$, one has $m=r$. Then for any $t'\le t$ in $f(X)$, one has $r(t)=m(t)\le r(t')\le r(t)$. As a consequence, one has $r(t)=r(t')$. Because $C$ is a chain, there is $s_0\in X$ such that the map $r:f(X)\to X$ is constantly $s_0$. Then $s_0\in F(t)$, so $f(s_0)\le t$. Therefore, $f(s_0)=\min f$. In particular, $\argmin_Xf$ is nonempty.
	
	For any $x,y\in \argmin_Xf$, as $f$ is  meet-subextremal, either $f(x\wedge y)\le f(x)=\min f$ or $f(x\vee y)\le f(x)\vee f(y)=\min f$, i.e., either $x\wedge y$ or $x\vee y$ is in $\argmin_Xf$. This shows that $\argmin_Xf$ is a quasisublattice of $X$.
\end{proof}
\begin{eg}\label{eg:41}
	Let $f:[0,2]\to \R$ be the characteristic function of $[1,2)$. Then $f$ is subextremal. By Lemma \ref{lm:sub=w+v}, it satisfies the condition of Theorem \ref{thm:41}. However, $[f\le0]=[0,1)\cup\{2\}$ is not chain-subcomplete upwards in $[0,2]$. In this case, \cite[Ch.~6, Theorem 41]{veinott1992lattice} is not applicable.
\end{eg}

\begin{cor}\label{cor:41}
	Let $f:X\to C$ be a  map from  a nonempty lattice to  a chain.
	\begin{enumerate}
		\item If $f$ is   join-subextremal and for every $t\in C$, $[f\le t]$ is chain-complete upwards in $X$ and admits minimal elements, then $\argmin_X f$ is a nonempty  quasisublattice of $X$.
		\item \label{it:wsup}	If $f$ is  meet-superextremal and for every $t\in C$, $[f\ge t]$ is chain-complete downwards  in $X$ and admits maximal elements, then $\argmax_X f$ is a nonempty  quasisublattice of $X$.
		\item\label{it:vsup} Assume that $f$ is  join-superextremal and  for every $t\in C$, the subset $[f\ge t]\subset X$ is chain-complete upwards  and admits minimal elements. Then $\argmax_X f$ is a nonempty  quasisublattice of $X$.
	\end{enumerate}
\end{cor}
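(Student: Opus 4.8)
\section*{Proof proposal}

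The plan is to deduce all three items from Theorem~\ref{thm:41} by passing to opposite posets, so no new argument is needed — only bookkeeping. Three trivial observations drive the reduction. First, the notion of quasisublattice is self-dual: the defining condition ``$x\wedge y\in S$ or $x\vee y\in S$'' is symmetric under interchanging $\wedge$ and $\vee$, so $S\subseteq X$ is a quasisublattice of $X$ if and only if it is one of $X^{\op}$. Second, $\argmin$ and $\argmax$ depend only on the values of $f$ and the order of the chain, not on the order of the domain; hence $\argmin_X f=\argmin_{X^{\op}}f$, and replacing the chain $C$ by $C^{\op}$ turns a minimizer into a maximizer, i.e.\ $\argmin_X(f\colon X\to C^{\op})=\argmax_X f$. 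Third, by Lemma~\ref{lm:wksub}, meet-subextremality of $f\colon X\to C$ is equivalent to join-subextremality of $f\colon X^{\op}\to C$, to meet-superextremality of $f\colon X\to C^{\op}$, and to join-superextremality of $f\colon X^{\op}\to C^{\op}$. Finally, chain-completeness of a level set and the existence of extremal elements in it are properties of the poset $X$ alone: dualizing $C$ replaces $[f\le t]$ by $[f\ge t]$ but leaves ``chain-complete upwards/downwards'' and ``admits minimal/maximal elements'' untouched, while dualizing $X$ interchanges ``upwards'' with ``downwards'' and ``minimal'' with ``maximal''.

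With this in hand, for the first item I would apply Theorem~\ref{thm:41} to $f\colon X^{\op}\to C$: it is meet-subextremal by Lemma~\ref{lm:wksub}, and for each $t$ the set $[f\le t]$, being chain-complete upwards in $X$ with minimal elements there, is chain-complete downwards in $X^{\op}$ with maximal elements there; the theorem then says $\argmin_{X^{\op}}f=\argmin_X f$ is a nonempty quasisublattice of $X^{\op}$, hence of $X$. For the second item, apply Theorem~\ref{thm:41} to $f\colon X\to C^{\op}$: it is meet-subextremal by Lemma~\ref{lm:wksub}, and $[f\le t]$ computed in $C^{\op}$ is $[f\ge t]$ in $C$, which is chain-complete downwards in $X$ with maximal elements by hypothesis; the conclusion is that $\argmin_X(f\colon X\to C^{\op})=\argmax_X f$ is a nonempty quasisublattice of $X$. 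For the third item, combine both reductions and apply Theorem~\ref{thm:41} to $f\colon X^{\op}\to C^{\op}$, meet-subextremal by Lemma~\ref{lm:wksub}; here $[f\le t]$ in $C^{\op}$ is $[f\ge t]$ in $C$, which is chain-complete upwards in $X$ hence downwards in $X^{\op}$ and admits minimal elements in $X$ hence maximal elements in $X^{\op}$, so the theorem yields that $\argmin_{X^{\op}}(f\colon X^{\op}\to C^{\op})=\argmax_X f$ is a nonempty quasisublattice of $X^{\op}$, hence of $X$.

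The only place requiring care — and the closest thing to an obstacle — is keeping these dualizations consistent: one must check that the four extremality variants match the four combinations of $X/X^{\op}$ and $C/C^{\op}$ exactly as recorded in Lemma~\ref{lm:wksub}, and that the hypotheses on level sets flip in the prescribed way. I do not expect any genuine difficulty beyond this, since the substance of the statement is entirely contained in Theorem~\ref{thm:41}.
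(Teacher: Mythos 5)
Your proposal is correct and follows exactly the paper's own route: the paper proves Corollary \ref{cor:41} by citing Theorem \ref{thm:41} together with Lemma \ref{lm:wksub}, i.e.\ precisely the dualization in $X$ and/or $C$ that you carry out. Your write-up simply makes explicit the bookkeeping (level sets, upwards/downwards, minimal/maximal, $\argmin$ vs.\ $\argmax$) that the paper leaves implicit, and all of it checks out.
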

\begin{proof}
	They follow from Theorem \ref{thm:41} and Lemma \ref{lm:wksub}. 
\end{proof}
In Corollary \ref{cor:41} \ref{it:wsup}, if furthermore $C=\R$, and if the function $f$ is  order upper semi-continuous (in the sense of \cite[p.1261]{milgrom1990rationalizability}) and quasisupermodular, then the result  specializes to the existence part of \cite[Theorem A4]{milgrom1994monotone}. This generalization is also proper, as shown by Example \ref{eg:01ab} \ref{it:wsubnotv}.
\section{Structure of Nash equilibria}\label{sec:mainTopkis3}
We  study the set of Nash equilibria of normal form games with generalized complementarities. A brief overview of the main results is as follows. In Theorem \ref{thm:existence}, we establish the existence of Nash equilibria. With stronger hypotheses, we enhance it to the existence of largest Nash equilibrium in Theorem \ref{thm:largest}. We give sufficient conditions ensuring that Nash equilibria form a complete lattice in Theorem \ref{thm:structure}.
\subsection*{Model}
Recall the notion of Nash equilibria of normal form games.
\begin{df}
	A normal form game $(N,\{S_i\},\{C_i\},\{u_i\})$ is the following data: \begin{enumerate}
		\item a nonempty  set of players $N$;
		\item for every $i\in N$, a nonempty set $S_i$ of the strategies  of player $i$; Write $S=\prod_iS_i$ for the set of joint strategies and $S_{-i}:=\prod_{j\neq i}S_j$;
		\item for every player $i\in N$,  a nonempty chain $C_i$ of gains of player $i$ and a payoff function $u_i: S\to C_i$.
	\end{enumerate}
\end{df}
Fix such a game $(N,\{S_i\},\{C_i\},\{u_i\})$.	
\begin{df}A  joint strategy $x\in  S$ is a Nash equilibrium  if $u_i(y_i,x_{-i})\le u_i(x)$ for every $i\in N$ and every $y_i\in S_i$. \end{df}
for every player $i\in N$, the (individual) best response correspondence  $R_i:S\to 2^{S_i}$ is defined by \[R_i(x)=\argmax_{y_i\in S_i}u_i(y_i,x_{-i}).\] Since $R_i$ factors through the natural projection $S\to S_{-i}$, it can also be written as  $R_i:S_{-i}\to 2^{S_i}$. The joint best response $R:S\to 2^S$ is defined as $R(x)=\prod_{i\in N}R_i(x_{-i})$. The set of Nash equilibria of the game coincides with the set $\Fix(R)$ of fixed points of $R:S\to 2^S$.

The intuitive idea of two parts having  complementarities is  that,  increasing the level of one part makes desire to increase the  level of the other as well. Single crossing property is a notion to capture the idea that ``marginal returns to increasing one’s strategy rise with increases in competitors’ strategies'' (\cite[Footnote 6]{dubey2006strategic}). Calciano \cite[Def.~11]{calciano2010theory} proposes a notion called  \emph{generalized modularity} to capture the idea. Inspired by his work, we propose two related conditions and recall their precedent from \cite{milgrom1994monotone}.
\begin{df}
	Let $X,T$ be  posets, $C$  be a chain. Consider a map $f:X\times T\to C$.
	If for any $x<x'$ in $X$ and any $t< t'$ in $T$,
	\begin{enumerate}
		\item the condition $f(x,t)\le f(x',t)$  implies $f(x,t')\le f(x',t')$, then $f$ is called  modular-crossing  relative to $(X,T)$; 
		\item	the condition $f(x,t)\le f(x',t)$ (resp. $f(x,t')\le f(x',t')$) implies the existence of $u\ge x'$ in $X$ (resp. $v\le x$ in $X$) with
		$f(x,t')\le f(u,t')$ (resp. $f(x,t)\le f(v,t)$), then $f$ is called upper-crossing (resp. lower-crossing)  relative to $(X,T)$;
		\item \cite[p.160]{milgrom1994monotone} the condition $f(x,t)\le f(x',t)$ implies  $f(x,t')\le f(x',t')$, and if the condition $f(x,t)< f(x',t)$  implies $f(x,t')< f(x',t')$, then $f$ is said to satisfy the single crossing property relative to $(X,T)$.
	\end{enumerate}
\end{df}
If $f$ satisfies the single crossing property, then $f$ is modular-crossing. If $f$ is modular-crossing, then $f$ is  upper-crossing.
\begin{eg}\begin{enumerate}
\item	Let $X=T=\{0,1\}$ and $C=\R$. Define a function $f:X\times T\to C$  by $f(0,0)=f(0,1)=f(1,1)=0$ and $f(1,0)=1$. Then $f$ is modular-crossing but does not satisfy the single crossing property relative to $(X,T)$.

\item	Let $X=\{0,1,2\}$, $T=\{0,1\}$ and $C=\R$. Define a function $f:X\times T\to C$ by setting $f(1,1)=-1$ and  all the other values to be $0$. Then $f$ is upper-crossing, but not modular crossing relative to $(X,T)$.\end{enumerate}
\end{eg}
\subsection*{Existence}
In	Theorem \ref{thm:existence}, we consider the existence of Nash equilibria under purely order-theoretic hypotheses. The meet-superextremality (resp. modular-crossing) assumption is strictly weaker than quasisupermodularity (resp. single crossing property) used in \cite[Thm.~12]{milgrom1994monotone}. 
\begin{thm}\label{thm:existence}
	Assume  that for every $i\in N$, \begin{enumerate}\item $S_i$ is a complete lattice;
		\item\label{it:uisuperextremal}   for every $s_{-i}\in S_{-i}$, the function $u_i(\cdot,s_{-i}):S_i\to C_i$ is  meet-superextremal (resp. join-superextremal);
			\item\label{it:modularcrossing}   the payoff function $u_i:S\to C_i$ is  modular-crossing  relative to $(S_i,S_{-i})$;
		\item\label{it:uiuplevel} for  every $s_{-i}\in S_{-i}$ and every $t\in C_i$, the subset $[u_i(\cdot,s_{-i})\ge t]$ is chain-complete downwards (resp. upwards) and admits maximal (resp. minimal) elements.
	\end{enumerate}Then the game admits a Nash equilibrium.
\end{thm}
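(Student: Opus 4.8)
The plan is to reduce the existence of a Nash equilibrium to Tarski's fixed point theorem: for each player $i$ I will choose an order-preserving selection $r_i$ of the best-response correspondence $R_i:S_{-i}\to 2^{S_i}$, assemble these into an order-preserving self-map of the complete lattice $S=\prod_i S_i$, and take a fixed point. I describe the meet-superextremal case. First, fix $i$ and $s_{-i}\in S_{-i}$: by hypothesis \ref{it:uisuperextremal} the map $u_i(\cdot,s_{-i}):S_i\to C_i$ is meet-superextremal, and by \ref{it:uiuplevel} each level set $[u_i(\cdot,s_{-i})\ge t]$ is chain-complete downwards with maximal elements, so Corollary \ref{cor:41}\ref{it:wsup} shows that $R_i(s_{-i})=\argmax_{S_i}u_i(\cdot,s_{-i})$ is a nonempty quasisublattice of $S_i$; writing $t_i(s_{-i})$ for the attained maximum, $R_i(s_{-i})=[u_i(\cdot,s_{-i})\ge t_i(s_{-i})]$ is itself chain-complete downwards with maximal elements.

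The key step is to show that each $R_i:S_{-i}\to 2^{S_i}$ is weakly ascending. Let $s_{-i}<s_{-i}'$ in $S_{-i}$, $y\in R_i(s_{-i})$ and $y'\in R_i(s_{-i}')$, and apply meet-superextremality of $u_i(\cdot,s_{-i})$ to the pair $y,y'$. If $u_i(y\wedge y',s_{-i})\ge u_i(y,s_{-i})$, then, as $y\wedge y'\le y$ and $y$ maximizes $u_i(\cdot,s_{-i})$, this is an equality and $y\wedge y'\in R_i(s_{-i})$. Otherwise $u_i(y\vee y',s_{-i})\ge u_i(y,s_{-i})\wedge u_i(y',s_{-i})=u_i(y',s_{-i})$, the equality holding because $u_i(y,s_{-i})$ is maximal; now if $y\le y'$ then $y\wedge y'=y\in R_i(s_{-i})$, while if $y\not\le y'$ then $y'<y\vee y'$ in $S_i$, so the modular-crossing hypothesis \ref{it:modularcrossing}, applied to $y'<y\vee y'$ in $S_i$ and $s_{-i}<s_{-i}'$ in $S_{-i}$, turns $u_i(y',s_{-i})\le u_i(y\vee y',s_{-i})$ into $u_i(y',s_{-i}')\le u_i(y\vee y',s_{-i}')$, which, since $u_i(y',s_{-i}')$ is maximal for $s_{-i}'$, forces $y\vee y'\in R_i(s_{-i}')$. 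Thus in every case $y\vee y'\in R_i(s_{-i}')$ or $y\wedge y'\in R_i(s_{-i})$.

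Finally, the increasing-selection argument from the proof of Theorem \ref{thm:41} — the dual of \cite[Theorem 2.2]{kukushkin2013increasing}, which applies because $R_i$ is weakly ascending with chain-complete-downwards values — furnishes an increasing selection $r_i:S_{-i}\to S_i$ for each $i$. Set $r:S\to S$, $r(x)=(r_i(x_{-i}))_{i\in N}$; the order on $S$ is coordinatewise and each $x\mapsto x_{-i}$ is order-preserving, so $r$ is order-preserving, and $S$ is a complete lattice as a product of complete lattices, so Tarski's fixed point theorem yields a fixed point $x^{\ast}$ of $r$. Then $x_i^{\ast}=r_i(x_{-i}^{\ast})\in R_i(x_{-i}^{\ast})$ for every $i$, i.e. $x^{\ast}$ is a Nash equilibrium. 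The main obstacle is the weakly-ascending step: it is precisely there that meet-superextremality and modular-crossing must be made to cooperate, one alternative of meet-superextremality either placing $y\wedge y'$ in $R_i(s_{-i})$ directly or yielding at $s_{-i}$ a weak comparison that modular-crossing carries forward to $s_{-i}'$; re-establishing this property from join-superextremality (via Corollary \ref{cor:41}\ref{it:vsup}) and the crossing hypothesis is the one point that again requires care in the join-superextremal case.
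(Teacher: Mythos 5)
Your proposal is correct and follows essentially the same route as the paper: nonemptiness of each $R_i(s_{-i})$ via Corollary \ref{cor:41} \ref{it:wsup}, weak ascendance of $R_i$ obtained from meet-superextremality together with modular-crossing (you argue it directly by cases, the paper by contradiction, but the content is identical), then an increasing selection from \cite[Theorem 2.2]{kukushkin2013increasing} and Tarski's fixed point theorem on the complete lattice $S$. The only place you stop short of the full statement is the parenthetical (join-superextremal) variant, which you flag as needing care but do not carry out, whereas the paper dispatches it with a one-line appeal to symmetry, i.e.\ by passing to the opposite strategy lattices.
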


\begin{proof}
	By symmetry, it is enough to prove the statement  without parentheses.	From Theorem \ref{cor:41} \ref{it:wsup}, Conditions \ref{it:uisuperextremal} and \ref{it:uiuplevel}, for every $i\in N$ and every $s\in S$, the subset $R_i(s)\subset S_i$ is \emph{nonempty} chain-complete downwards, and it admits maximal elements.
	
	We show that $R_i:S_{-i}\to 2^{S_i}$ is weakly ascending. Assume the contrary. Then there exist $x_{-i}<x'_{-i}$ in $S_{-i}$,  $x_i\in R_i(x_{-i})$ and $x'_{-i}\in R_i(x'_{-i})$ satisfying $x_i\wedge x'_i\notin R_i(x_{-i})$ and $x_i\vee x'_i\notin R_i(x'_{-i})$.
	Then  $u_i(x_i\wedge x'_i,x_{-i})<u_i(x_i,x_{-i})$. Because $x_i\in R_i(x_{-i})$, one has   $u_i(x_i,x_{-i})\ge u_i(x'_i,x_{-i})$. By Condition \ref{it:uisuperextremal}, $u_i(x_i\vee x_i',x_{-i})\ge u_i(x_i,x_{-i})\wedge u_i(x'_i,x_{-i})=u_i(x'_i,x_{-i})$. 
	Since $x_i\wedge x'_i\notin R_i(x_{-i})$, we get $x_i\wedge x'_i\neq x_i$. Thus, $x_i'<x_i\vee x_i'$. From Condition \ref{it:modularcrossing}, one has $u_i(x_i\vee x_i',x'_{-i})\ge u_i(x'_i,x'_{-i})$. This contradicts $x_i\vee x'_i\notin R_i(x'_{-i})$.

	By \cite[Theorem 2.2]{kukushkin2013increasing}, as $R_i:S_{-i}\to 2^{S_i}$ is weakly ascending, there is an increasing selection $r_i:S_{-i}\to S_i$ of $R_i$.
	For every $s\in S$, let $r(s)=(r_i(s_{-i}))_{i\in N}$ be an element of $S$ whose $i$-th coordinate is $r_i(s_{-i})$. Thus,  $r:S\to S$ is
	a selection of the best reply correspondence $R:S\to 2^S$. By construction, $r$ is increasing. By completeness of $S$ and Tarski's fixed point theorem \cite[Theorem 1]{tarski1955lattice}, $r$ has a fixed point, which is a Nash equilibrium.
\end{proof}
In Theorem \ref{thm:existence}, the set of Nash equilibria may not have largest nor least element.
\begin{eg}
	Let $N=\{1,2\}$ and $S_i=[0,1]$. Define $u_1:S\to \R$ by $u_1(s_1,s_2)=s_1s_2$
 Define \[u_2:S\to \R,\quad s\mapsto \begin{cases}
0 & \text{ if }s_2\in\{0,1\},\\
1 & \text{ if }0<s_2<1.
 \end{cases}\] Then the corresponding game satisfies the condition of Theorem \ref{thm:existence}. The best reply correspondence is \[R:S\to 2^S,\quad s\mapsto \begin{cases}
	[0,1]\times (0,1) & \text{ if } s_2=0,\\
	\{1\}\times (0,1) & \text{ if }s_2>0.
	\end{cases}\] The set of Nash equilibria is $\{1\}\times (0,1)$, which has no largest nor least element.
\end{eg}
In Theorem \ref{thm:existence}, the set of Nash equilibria may not be a lattice.
\begin{eg}
	Let $N=\{1,2\}$ and $S_i=[0,1]$. Define $u_1:S\to \R$ by \[u_1(s)=\begin{cases}
1& \text{ if }s_2\le 1/2\le s_1,\\
0 &\text{else}.
	\end{cases}\]
Let $u_2:S\to \R$ be constantly zero. Then the corresponding game satisfies the condition of Theorem \ref{thm:existence}. The best reply correspondence is \[R:S\to 2^S,\quad s\mapsto \begin{cases}
		[1/2,1]\times [0,1] & \text{ if }s_2\le 1/2,\\
		 [0,1]^2& \text{ if }s_2>1/2.
	\end{cases}\] The set of Nash equilibria is $[1/2,1]\times [0,1/2]\cup[0,1]\times (1/2,1]$, which is not a lattice.
\end{eg}

Agliardi \cite[Proposition 4]{agliardi2000generalization} proves that for a game satisfying pseudo-supermodularity, single crossing condition and its variant \cite[Condition (A), p.253]{agliardi2000generalization}, as well as topological assumptions, there is a largest and a least Nash equilibria. Theorem \ref{thm:largest} is a purely order-theoretic analog of Agliardi's result. 

Compared with Theorem \ref{thm:existence}, the existence of largest Nash equilibrium in Theorem \ref{thm:largest} is stronger than the mere existence.  The upper-crossing condition is weaker than the modular-crossing condition. Still, the weakly pseudo-supermodular hypothesis in Theorem \ref{thm:largest} is stronger than the  meet-superextremality hypothesis in Theorem \ref{thm:existence}. \begin{df}[{\cite[p.542]{kukushkin2013increasing}}]
	Let $P$ be a poset.   If for every nonempty chain  $C\subset P$, there is $u\in P$ such that for every  $c\in C$, one has $c\le u$ (resp. $c\ge u$), then $P$ is called \emph{chain-bounded above} (resp. \emph{below}).\end{df}
\begin{thm}\label{thm:largest}
	Assume that  for every $i\in N$, \begin{enumerate}\item $S_i$ is a complete lattice;
	\item\label{it:large-extremal}    for every $s_{-i}\in S_{-i}$, the function $u_i(\cdot,s_{-i}):S_i\to C_i$ is weakly pseudo-supermodular;
		\item  the payoff function $u_i:S\to C_i$ is an upper-crossing  function relative to $(S_i,S_{-i})$;
			\item\label{it:large-up}  for  every $s_{-i}\in S_{-i}$ and every $t\in C_i$, the subset $[u_i(\cdot,s_{-i})\ge t]$ is chain-complete downwards  and chain-bounded above. 
	\end{enumerate}Then the game admits a largest Nash equilibrium.
\end{thm}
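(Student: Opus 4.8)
The plan is to construct the ``largest best response'' selection for each player and to conclude with Tarski's fixed point theorem. Fix $i\in N$ and $s_{-i}\in S_{-i}$; I first claim that $R_i(s_{-i})=\argmax_{y_i\in S_i}u_i(y_i,s_{-i})$ has a largest element. Nonemptiness comes from Corollary \ref{cor:41} \ref{it:wsup}: weak pseudo-supermodularity implies meet-superextremality, each set $[u_i(\cdot,s_{-i})\ge t]$ is chain-complete downwards by Condition \ref{it:large-up}, and such a set, being nonempty and chain-bounded above, has a maximal element by Zorn's lemma. Let $M$ be the maximum of $u_i(\cdot,s_{-i})$, so $R_i(s_{-i})=[u_i(\cdot,s_{-i})\ge M]$. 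This set is closed under binary joins: if $x,y\in R_i(s_{-i})$ but $u_i(x\vee y,s_{-i})<M=u_i(x,s_{-i})\wedge u_i(y,s_{-i})$, then Condition \ref{it:large-extremal} forces $u_i(x\wedge y,s_{-i})>u_i(x,s_{-i})=M$, a contradiction. Taking a maximal element $m$ of $R_i(s_{-i})$, for every $x\in R_i(s_{-i})$ we get $m\vee x\in R_i(s_{-i})$ and $m\le m\vee x$, hence $m\vee x=m$ and $x\le m$; so $m$ is the largest element of $R_i(s_{-i})$, which I write $\bar r_i(s_{-i})$.

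Next I would show $\bar r_i\colon S_{-i}\to S_i$ is increasing. Let $x_{-i}<x'_{-i}$, put $m=\bar r_i(x_{-i})$ and $m'=\bar r_i(x'_{-i})$, and suppose for contradiction $m\not\le m'$, so that $m'<m\vee m'$. The same use of Condition \ref{it:large-extremal}, now invoking only that $u_i(m,x_{-i})$ is the maximum of $u_i(\cdot,x_{-i})$, gives $u_i(m',x_{-i})\le u_i(m\vee m',x_{-i})$. Applying the upper-crossing property of $u_i$ to the strict pair $m'<m\vee m'$ in $S_i$ and $x_{-i}<x'_{-i}$ in $S_{-i}$, we obtain some $u\ge m\vee m'$ in $S_i$ with $u_i(m',x'_{-i})\le u_i(u,x'_{-i})$. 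Since $u_i(m',x'_{-i})$ is the maximum of $u_i(\cdot,x'_{-i})$, it follows that $u\in R_i(x'_{-i})$, hence $u\le\bar r_i(x'_{-i})=m'$; together with $m'\le m\vee m'\le u$ this gives $m\vee m'=m'$, i.e.\ $m\le m'$, contradicting our assumption. I expect this monotonicity step to be the main obstacle, since it must orchestrate weak pseudo-supermodularity (to push a payoff value up across a join), the upper-crossing property (to transport an optimum to a larger own-strategy once the opponents' profile has risen), and the maximality built into $\bar r_i(x'_{-i})$.

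Finally, $S=\prod_{i\in N}S_i$ is a complete lattice, and the map $\bar r\colon S\to S$ with $\bar r(s)=(\bar r_i(s_{-i}))_{i\in N}$ is increasing, so by Tarski's fixed point theorem \cite[Theorem 1]{tarski1955lattice} it has a greatest fixed point $s^{*}=\bigvee\{s\in S:s\le\bar r(s)\}$. Any fixed point $s$ of $\bar r$ is a Nash equilibrium, because $s_i=\bar r_i(s_{-i})\in R_i(s_{-i})$ for every $i$; conversely, any Nash equilibrium $s$ satisfies $s_i\in R_i(s_{-i})$, hence $s_i\le\bar r_i(s_{-i})$ for every $i$, i.e.\ $s\le\bar r(s)$, so $s\le s^{*}$. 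Therefore $s^{*}$ is the largest Nash equilibrium.
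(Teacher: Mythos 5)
Your proof is correct, and it reaches the conclusion by a different final mechanism than the paper. The paper does not construct a single-valued selection: it verifies that the joint best-reply correspondence $R$ is ``upper C-ascending'' (using exactly the same interplay you use --- weak pseudo-supermodularity at an own-optimum to get $u_i(x_i\vee x_i',x_{-i})\ge u_i(x_i',x_{-i})$, then upper-crossing to produce some $x_i''\ge x_i\vee x_i'$ that is a best reply to the larger profile) and then invokes an external fixed-point lemma for such correspondences with nonempty, chain-bounded-above values \cite[Lemma 3.6]{yu2023generalization2}. You instead prove the stronger structural fact that each $R_i(s_{-i})$ has a \emph{greatest} element --- join-closedness of the argmax follows from weak pseudo-supermodularity evaluated at the maximum, and a Zorn-maximal element of a join-closed set is its maximum --- then show the greatest-best-reply map $\bar r$ is increasing (here the maximality of $\bar r_i(x'_{-i})$ replaces the correspondence-level conclusion) and finish with classical Tarski. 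What your route buys is self-containedness (no appeal to the cited correspondence fixed-point machinery beyond Tarski) plus the byproduct that best-reply sets have largest elements under these hypotheses; what the paper's route buys is brevity given the cited lemma and an argument that never needs greatest best replies, only maximal ones. One cosmetic remark: for monotonicity you treat only $x_{-i}<x'_{-i}$; the case $x_{-i}=x'_{-i}$ is of course trivial, but it is worth a word since the upper-crossing hypothesis is stated only for strictly ordered pairs.
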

\begin{proof}
	From Zorn's lemma, a chain-bounded above poset admits maximal elements.	Then by Assumptions \ref{it:large-extremal}, \ref{it:large-up} and Corollary \ref{cor:41} \ref{it:wsup}, for every $i\in N$ and every $s\in S$, the subset  $R_i(s)$ is  nonempty. By Assumption \ref{it:large-up}, it is also chain-bounded above.  
	
We claim that $R:S\to 2^S$ is upper C-ascending in the sense of \cite[Definition 2.3]{yu2023generalization2}. For this,	consider every $i\in N$, any $x_{-i}\le x'_{-i}$ in $S_{-i}$, every $x_i\in R_i(x_{-i})$ and every $x_i'\in R_i(x'_{-i})$. Since $u_i(x_i\wedge x'_i,x_{-i})\le u_i(x_i,x_{-i})$ and $u_i(\cdot,x_{-i})$ is weakly pseudo-supermodular,  one has \[u_i(x_i\vee x_i',x_{-i})\ge u_i(x_i,x_{-i})\wedge u_i(x'_i,x_{-i})=u_i(x'_i,x_{-i}).\] By upper-crossing property of $u_i$, there exists $x''_i\in S_i$ with $x''_i \ge x_i\vee x'_i$  and $u_i(x'_i,x_{-i}')\le u_i(x_i'',x_{-i}')$. Thus $x_i''\in R_i(x'_{-i})$. For any $x\le x'$ in $S$, every $y\in R(x)$, every $y'\in R(x')$ and every $i\in N$,  there is $z_i\in R_i(x'_{-i})$ with $z_i\ge y_i\vee y'_i$. Then $z=(z_i)_i$ is an element of $R(x')$ with $z\ge y\vee y'$. 
	
	The claim together with \cite[Lemma 3.6]{yu2023generalization2} yields the result.
\end{proof}
In Theorem \ref{thm:largest}, the set of equilibria may not be a lattice nor posses a least element, as Example \ref{eg:hasmaxnomin} illustrates. 
\begin{eg}\label{eg:hasmaxnomin} 
	Let $N=\{1,2\}$, $S_i=[0,1]$. Define $u_1:S\to \R$ by \[u_1(s_1,s_2)=\begin{cases}
		s_1& \text{ if }s_2\le 1/2,\\
		0& \text{ if }s_2>1/2.
	\end{cases}\] Define $u_2:S\to \R,\quad s\mapsto 0$. Then the corresponding game satisfies the conditions of Theorem \ref{thm:largest}. The best reply correspondence is \[R:S\to 2^S,(s_1,s_2)\mapsto \begin{cases}
		\{1\}\times [0,1] & \text{ if }s_2\le 1/2;\\
		[0,1]^2 & \text{ if }s_2>1/2.
	\end{cases}\]  The set of Nash equilibria is $\{1\}\times [0,1/2]\cup[0,1]\times (1/2,1]$. The largest Nash equilibrium is $(1,1)$, but there is no least Nash equilibrium. A minimal Nash equilibrium is $(1,0)$. By \cite[Lem.~2.1]{yu2023topkis2}, the  set of Nash equilibria is not a lattice.
\end{eg}
\subsection*{Completeness}
Theorem \ref{thm:structure} is purely order-theoretic. In particular,  we do not require   topological conditions \cite[(1), (2), p.175]{milgrom1994monotone}.  As Example \ref{eg:01ab} \ref{it:wkqsnotqs} shows, the weak quasisupermodularity hypothesis is strictly weaker than quasisupermodularity assumed in \cite[(3), p.175]{milgrom1994monotone}. The assumption \ref{it:uplevelcpt<usc} is weaker than the condition of \cite[Theorem A4]{milgrom1994monotone}. Moreover, the completeness of the Nash equilibria set is stronger than the existence of largest and smallest Nash equilibria established in \cite[Theorem 12]{milgrom1994monotone}. 
\begin{thm}\label{thm:structure}
	Assume that  for every $i\in N$, \begin{enumerate}\item $S_i$ is a complete lattice;
	\item\label{it:quasisupermodular} for every $s_{-i}\in S_{-i}$, the payoff function $u_i(\cdot,s_{-i}):S_i\to C_i$ is weakly quasisupermodular;
		\item\label{it:singlecrossing} the function $u_i:S\to C_i$ satisfies the single crossing property relative to $(S_i,S_{-i})$;
		\item\label{it:uplevelcpt<usc}  for every $s_{-i}\in S_{-i}$ and every $t\in C_i$, the set	 $[u_i(\cdot,s_{-i})\ge t]$ is chain-complete downwards (resp. upwards)  and  chain-bounded above (resp. below).
	\end{enumerate}
	Then the set of Nash equilibria is a nonempty complete lattice. 
\end{thm}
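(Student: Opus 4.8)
The plan is to follow Zhou's strategy for supermodular games: verify that the joint best-response correspondence $R\colon S\to 2^S$, $R(s)=\prod_{i}R_i(s_{-i})$, is an \emph{ascending} correspondence whose values are nonempty complete lattices, and then invoke the fixed-point theorem for such correspondences to conclude that $\Fix(R)$, which is exactly the set of Nash equilibria, is a nonempty complete lattice. As in Zhou's theorem, the suprema and infima taken inside $\Fix(R)$ need not agree with those of $S$, so no claim is made that the equilibrium set is a sublattice of $S$.

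I would first analyse a single best-response set $R_i(s_{-i})=\argmax_{S_i}u_i(\cdot,s_{-i})=[u_i(\cdot,s_{-i})\ge M(s_{-i})]$, where $M(s_{-i})$ denotes the maximal value of $u_i(\cdot,s_{-i})$. Weak quasisupermodularity (Assumption \ref{it:quasisupermodular}) implies meet-superextremality by Corollary \ref{cor:qs>supextremal}, so Corollary \ref{cor:41} \ref{it:wsup} together with Assumption \ref{it:uplevelcpt<usc} — passing from chain-boundedness to the existence of extremal elements by Zorn's lemma — shows that $R_i(s_{-i})$ is nonempty. The strict clauses of weak quasisupermodularity upgrade this to the statement that $R_i(s_{-i})$ is a \emph{sublattice} of $S_i$: if $x,y\in R_i(s_{-i})$ and $u_i(x\wedge y,s_{-i})<M(s_{-i})=u_i(x,s_{-i})$, the first clause gives $M(s_{-i})=u_i(x,s_{-i})\wedge u_i(y,s_{-i})<u_i(x\vee y,s_{-i})$, which is impossible, and symmetrically the second clause yields $x\vee y\in R_i(s_{-i})$. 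Being a sublattice of $S_i$ that is chain-complete and chain-bounded in the directions supplied by Assumption \ref{it:uplevelcpt<usc}, hence possessing a largest and a smallest element, the set $R_i(s_{-i})$ is a complete lattice; consequently so is the product $R(s)=\prod_{i}R_i(s_{-i})$.

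The core of the proof is to show that $R$ is ascending: for $s\le s'$ in $S$, $x\in R(s)$ and $x'\in R(s')$, one has $x\wedge x'\in R(s)$ and $x\vee x'\in R(s')$. It suffices to argue coordinatewise, so fix $i$; then $s_{-i}\le s'_{-i}$, and if $s_{-i}=s'_{-i}$ both claims hold because $R_i(s_{-i})=R_i(s'_{-i})$ is a sublattice, so assume $s_{-i}<s'_{-i}$. For the meet, suppose $u_i(x_i\wedge x'_i,s_{-i})<u_i(x_i,s_{-i})$; then $x_i\not\le x'_i$, so $x'_i<x_i\vee x'_i$, and weak quasisupermodularity of $u_i(\cdot,s_{-i})$ gives $u_i(x'_i,s_{-i})<u_i(x_i\vee x'_i,s_{-i})$ (using that $u_i(x_i,s_{-i})=M(s_{-i})\ge u_i(x'_i,s_{-i})$); the strict form of the single crossing property (Assumption \ref{it:singlecrossing}), applied to $x'_i<x_i\vee x'_i$ and $s_{-i}<s'_{-i}$, then forces $u_i(x'_i,s'_{-i})<u_i(x_i\vee x'_i,s'_{-i})$, contradicting $x'_i\in R_i(s'_{-i})$; hence $u_i(x_i\wedge x'_i,s_{-i})=u_i(x_i,s_{-i})$, i.e.\ $x_i\wedge x'_i\in R_i(s_{-i})$. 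For the join, assume $x_i\not\le x'_i$ (otherwise $x_i\vee x'_i=x'_i\in R_i(s'_{-i})$) and suppose $u_i(x_i\vee x'_i,s'_{-i})<u_i(x'_i,s'_{-i})$; the second clause of weak quasisupermodularity of $u_i(\cdot,s'_{-i})$ gives $u_i(x_i,s'_{-i})<u_i(x_i\wedge x'_i,s'_{-i})$ (using that $u_i(x'_i,s'_{-i})=M(s'_{-i})$ is maximal), whereas the meet step gives $u_i(x_i\wedge x'_i,s_{-i})=u_i(x_i,s_{-i})$, so the non-strict single crossing property applied to $x_i\wedge x'_i<x_i$ and $s_{-i}<s'_{-i}$ forces $u_i(x_i\wedge x'_i,s'_{-i})\le u_i(x_i,s'_{-i})$, a contradiction; hence $x_i\vee x'_i\in R_i(s'_{-i})$.

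Once $R$ is known to be an ascending correspondence with nonempty complete-lattice values on the complete lattice $S$, Zhou's fixed-point theorem \cite{zhou1994set} yields that $\Fix(R)$ is a nonempty complete lattice, which is the assertion. I expect the delicate point to be the ascending property, and within it the join direction, where one must play the strict conclusion of weak quasisupermodularity at the larger parameter $s'_{-i}$ off against the non-strict single crossing property transported back to the smaller parameter $s_{-i}$; the degenerate sub-cases ($s_{-i}=s'_{-i}$ or $x_i\le x'_i$) and the passage from chain-boundedness to extremal elements are routine but must be handled carefully.
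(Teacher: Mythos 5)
Your core argument---nonemptiness of $R_i(s_{-i})$ via Corollary \ref{cor:41} \ref{it:wsup}, and the meet/join verification that $R_i$ is increasing by playing weak quasisupermodularity against the single crossing property---is essentially the paper's own proof of those steps. The genuine gap is your final step. Zhou's fixed point theorem \cite[Theorem 1]{zhou1994set} requires the values of the correspondence to be nonempty \emph{subcomplete} sublattices of $S$, i.e.\ closed under suprema and infima computed in $S$; knowing only that each $R(s)$ is a sublattice which is a complete lattice in its induced order does not satisfy this hypothesis, and under the assumptions of Theorem \ref{thm:structure} subcompleteness can genuinely fail: the paper's Example \ref{eg:cannotuseZhoufixed} gives a game meeting all four conditions in which $R(0,0)=([0,1/2)\cup\{1\})\times[0,1]$ is not subcomplete in $S=[0,1]^2$, since $\sup_{S_1}[0,1/2)=1/2\notin R_1(0)$. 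So the theorem you cite simply does not apply here; the paper instead concludes with a strengthening of Zhou's theorem, \cite[Theorem 1.3]{yu2023generalization1}, whose hypotheses are exactly what you have established (an increasing correspondence on a complete lattice whose values are nonempty complete lattices). Without that stronger fixed point theorem, or a proof of it, your argument does not close.

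A secondary flaw lies in how you get that each $R_i(s_{-i})$ is a complete lattice: ``chain-complete and chain-bounded \dots, hence possessing a largest and a smallest element, \dots is a complete lattice'' is not a valid inference, since a lattice with a top and a bottom need not be complete (e.g.\ $[0,1]\cap\mathbb{Q}$). The correct statement, which the paper isolates and proves as Lemma \ref{lm:MilgromRoberts}, is that a lattice which is chain-complete downwards and chain-bounded above is complete; this takes a Zorn-type argument to obtain upward chain-completeness followed by Veinott's lemma, not just the existence of extrema. With that lemma supplied and the fixed point theorem replaced by \cite[Theorem 1.3]{yu2023generalization1}, your outline coincides with the paper's proof (the paper also handles the parenthetical ``upwards/below'' variant by symmetry, which your write-up omits).
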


\begin{proof}By symmetry, it suffices to prove the statement without parentheses.	We claim that for every $i\in N$, the correspondence $R_i:S_{-i}\to 2^{S_i}$ is increasing in the sense of \cite[p.33]{topkis1998supermodularity}. For this, consider any $x_{-i}\le x'_{-i}$, every $x_i\in R_i(x_{-i})$ and every $x'_i\in R_i(x'_{-i})$. 
	\begin{itemize}\item One has $x_i\wedge x'_i\in R_i(x_{-i})$. Otherwise, $u_i(x_i\wedge x'_i,x_{-i})<u_i(x_i,x_{-i})$. As $u_i(\cdot,x_{-i})$ is weakly quasisupermodular, one has $u_i(x_i\vee x'_i,x_{-i})>u_i(x_i,x_{-i})\wedge u_i(x'_i,x_{-i})=u_i(x'_i,x_{-i})$. Since $u_i$ has the single crossing property, one has $u_i(x_i\vee x'_i,x'_{-i})>u_i(x'_i,x'_{-i})$, which contradicts $x'_i\in R_i(x'_{-i})$;
		\item One has $x_i\vee x'_i\in R_i(x'_{-i})$. Otherwise, $u_i(x_i\vee x'_i,x'_{-i})<u_i(x'_i,x'_{-i})$. As $u_i(\cdot,x'_{-i})$ is weakly quasisupermodular, one has $u_i(x_i\wedge x'_i,x'_{-i})>u_i(x_i,x'_{-i})\wedge u_i(x'_i,x'_{-i})=u_i(x_i,x'_{-i})$. Since
		$u_i$ has the single crossing property, one has $u_i(x_i\wedge x'_i,x_{-i})>u_i(x_i,x_{-i})$, which contradicts $x_i\in R_i(x_{-i})$.\end{itemize}
		The claim is proved. Then the correspondence $R:S\to 2^S$ is also increasing. In particular, for every $s\in S$, the value $R(s)$ is a sublattice of $S$.
	
By Corollary \ref{cor:41} \ref{it:wsup},	for every $i\in N$ and every $s\in S$,  the set $R_i(s)$ is  \emph{nonempty}. By assumption, $R_i(s)$ is  chain-complete downwards and chain-bounded above. Therefore, for every $s\in S$, the lattice $R(s)$ is  nonempty, chain-complete  downwards  and chain-bounded above.  By Lemma \ref{lm:MilgromRoberts}, it is a complete lattice. From  \cite[Theorem 1.3]{yu2023generalization1}, $\Fix(R)$ is a nonempty complete lattice.
\end{proof}
Lemma \ref{lm:MilgromRoberts} below is used in the proof of Theorem \ref{thm:structure}. It is stronger than \cite[Footnote 8]{milgrom1990rationalizability}, which strengthens the chain-bounded above hypothesis to chain-complete upwards condition  and is  stated without proof. 
\begin{lm}\label{lm:MilgromRoberts}
	A  lattice is  complete if and only if it is chain-complete downwards and chain-bounded above.
\end{lm}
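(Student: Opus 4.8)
The plan is to prove both directions. For the forward direction, recall that a complete lattice by definition has arbitrary suprema and infima; in particular every nonempty chain has a supremum and an infimum (so it is chain-complete both upwards and downwards), and the whole lattice has a top element, which bounds every chain from above. So this direction is immediate from the definitions. The substance is the converse: assume $L$ is a lattice that is chain-complete downwards and chain-bounded above; I must show $L$ is a complete lattice. A standard fact is that a poset is a complete lattice as soon as every subset has a supremum (the infimum of a subset $A$ is then recovered as the supremum of the set of lower bounds of $A$), and dually. So it suffices to show: every subset $A \subseteq L$ has a supremum in $L$.

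The key step is to produce, for an arbitrary $A \subseteq L$, the least upper bound. First I would handle $A = \varnothing$: its supremum is the least element of $L$, whose existence I must also establish — by chain-boundedness above applied to the empty chain (or by taking the infimum of $L$ once I know infima of chains plus Zorn-type arguments exist); more cleanly, once I show every nonempty $A$ has a sup, the empty sup is $\inf L = \sup\{\text{lower bounds of }L\} = \sup\varnothing$, so I will treat the bottom element as part of the general argument. For nonempty $A$, let $U$ be the set of upper bounds of $A$; $U$ is nonempty only if $A$ has some upper bound, which is where chain-boundedness above enters. The plan is a Zorn's lemma argument on $U$ ordered by the \emph{opposite} order: given a nonempty chain in $U$ (a chain in $L^{\op}$), it is a chain in $L$, so by chain-completeness downwards it has an infimum $m$ in $L$; since every element of the chain is an upper bound of $A$ and $A \ne \varnothing$, one checks $m$ is still an upper bound of $A$, hence $m \in U$ and $m$ is a lower bound of the chain in $L$, i.e. an upper bound of the chain in $U^{\op}$. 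Thus every chain in $U^{\op}$ has an upper bound, so by Zorn's lemma $U^{\op}$ has a maximal element, i.e. $U$ has a \emph{minimal} element $b$. Finally I must upgrade "minimal upper bound" to "least upper bound": given any other $c \in U$, the element $b \wedge c$ exists (lattice!), is $\le b$, and is again an upper bound of $A$ (since $a \le b$ and $a \le c$ give $a \le b\wedge c$ for all $a\in A$), so $b\wedge c \in U$; minimality of $b$ forces $b\wedge c = b$, i.e. $b \le c$. Hence $b = \sup A$.

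Having shown every nonempty subset has a supremum, I get the least element $0$ of $L$ as $\sup A$ for, say, $A$ a singleton is not enough — instead note $0 := \sup\varnothing$ should be handled directly: the set of all lower bounds of $L$ is either empty or, if nonempty, has a supremum by the nonempty case, and that supremum is the least element; alternatively, apply the same Zorn argument to $U = L$ (upper bounds of $\varnothing$) using that $L$ is chain-bounded above to get a minimal element of $L$, then the meet trick shows it is least. With a least element in hand and all subset-suprema in hand, $L$ is a complete lattice (infima come for free as suprema of lower-bound sets, using that the empty set of lower bounds has sup $0$).

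The main obstacle is the Zorn's lemma step: one must be careful that the chain of upper bounds is taken in $L^{\op}$, that chain-completeness \emph{downwards} of $L$ is exactly what delivers a candidate lower bound for that chain, and that nonemptiness of $A$ is genuinely used to conclude the infimum of a chain of upper bounds is still an upper bound (for $A=\varnothing$ this fails, which is why the bottom element needs the separate, but parallel, treatment via chain-boundedness above). Everything else — the meet trick promoting minimal to least, and the recovery of infima from suprema — is routine lattice theory.
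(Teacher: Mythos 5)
Your overall strategy (Zorn's lemma on the set of upper bounds, then the meet trick to promote a minimal upper bound to a least one, then recovering infima from suprema) is sound, but there is a genuine gap at the step where you need the set $U$ of upper bounds of an \emph{arbitrary} nonempty subset $A$ to be nonempty. You justify this by saying ``which is where chain-boundedness above enters,'' but chain-boundedness above only provides upper bounds for nonempty \emph{chains}; an arbitrary $A$ need not be a chain, so the hypothesis does not apply to it directly, and Zorn's lemma cannot be launched on an a priori possibly empty poset $U^{\op}$. The gap is fixable with an extra step you never state: first produce a greatest element of $L$ by applying Zorn's lemma to $L$ itself (nonempty, and every nonempty chain is bounded above by hypothesis) to get a maximal element $m$, and then note that in a lattice a maximal element is greatest, since $m\le m\vee x$ forces $m\vee x=m$, i.e.\ $x\le m$; this $m$ lies in $U$ for every $A$. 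A smaller slip of the same kind occurs in your treatment of $\sup\varnothing$: to get a minimal element of $L$ via ``the same Zorn argument'' you need chain-completeness \emph{downwards} (infima of chains are lower bounds), not chain-boundedness above, together with your meet trick; and your other suggestion (taking the supremum of the set of lower bounds of $L$) does not resolve the case where that set might be empty.

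Once the top element is supplied, your route is correct and genuinely different from the paper's. The paper only proves that every nonempty chain has a supremum --- there chain-boundedness above applies directly because the subset in question \emph{is} a chain, Zorn gives a minimal upper bound, and a minimal element of the sublattice of upper bounds is least --- and then it invokes Veinott's lemma that a chain-complete lattice is complete. Your version, by handling arbitrary subsets at once, avoids that external (and itself nontrivial) lemma and is more self-contained, at the price of needing the existence of a greatest element up front; the paper's version never needs a greatest element at this stage because it delegates the passage from chains to arbitrary subsets to the cited lemma.
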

\begin{proof}
	The ``only if'' part is by definition. To prove the ``if'' part, let $X$ be a chain-complete downwards and chain-bounded above lattice. 
	
	We claim that $X$ is chain-complete upwards. For every nonempty chain $C\subset X$, since $X$ is chain-bounded above, the subset $X':=\{x\in X|c\le x,\forall c\in C\}$ is nonempty. For any $a,b\in X'$ and every $c\in C$, one has $c\le a$ and $c\le b$, so $c\le a\wedge b\le a\vee b$. Thus, $a\wedge b, a\vee b\in X'$. In particular, $X'$ is a sublattice of $X$.  As $X$ is chain-complete downwards, for every nonempty chain $C'\subset X'$, the element $\inf_XC'$ exists. For every $c\in C$ and every $c'\in C'$, one has $c\le c'$, so $c\le \inf_XC'$. Thus, $\inf_XC'\in X'$ is a lower bound on $C'$. By Zorn's lemma, $X'$ has a minimal element $m$. From \cite[Lem.~2.1]{yu2023topkis2}, $m$ is the least element of  $X'$, so $m=\sup_XC$. The claim is proved. 
	
	By Veinott's lemma (see, e.g., \cite[Lemma 2.7]{yu2023generalization1}) and the claim, $X$ is complete.
\end{proof}

Theorem \ref{thm:structure} generalizes Zhou's theorem \cite[Theorem 2]{zhou1994set}.
\begin{cor}\label{cor:Zhoucomplete}
	Let	$(N,\{S_i\},\{C_i\},\{u_i\})$ be a normal form game.
	Assume that for every $i\in N$,  \begin{enumerate}
		\item  $S_i$ is a lattice with a topology $\tau_i$ finer than the interval topology;
		\item the chain $C_i$ is $\R$;
		\item  for every $s_{-i}\in S_{-i}$, the function $u_i(\cdot,s_{-i}):S_i\to \R$ is upper semicontinuous in $\tau_i$;
		\item for every $s_{-i}\in S_{-i}$,  the function $u_i(\cdot,s_{-i}):S_i\to \R$ is supermodular; 
		\item  the payoff function $u_i:S\to \R$ has increasing differences in $S_i$ and $S_{-i}$.
	\end{enumerate}
	Then the set of Nash equilibria is a nonempty complete lattice. 
\end{cor}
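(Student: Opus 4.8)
The plan is to deduce the statement directly from Theorem~\ref{thm:structure}: I verify its four hypotheses for the game in question. Fix $i \in N$. Since $S_i$ is compact in $\tau_i$ and $\tau_i$ refines the interval topology, for every $a, b \in S_i$ the principal up-set $\{x \in S_i : x \ge a\}$ and the principal down-set $\{x \in S_i : x \le b\}$ are $\tau_i$-closed; this is the only topological input I will use.

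A single observation handles hypotheses~(1) and~(4) of Theorem~\ref{thm:structure} at once: \emph{for every $\tau_i$-closed subset $D \subseteq S_i$ and every nonempty chain $C \subseteq D$, both $\sup_{S_i} C$ and $\inf_{S_i} C$ exist and lie in $D$.} To prove it, regard $C$ as an increasing net indexed by itself; it lies in $D$, which is $\tau_i$-compact as a closed subset of the $\tau_i$-compact space $S_i$, so some subnet converges in $\tau_i$ to a point $p \in D$. For each $c_0 \in C$ the subnet is eventually in the $\tau_i$-closed set $\{x \ge c_0\}$, so $p \ge c_0$; and every upper bound $q$ of $C$ satisfies $C \subseteq \{x \le q\}$, a $\tau_i$-closed set, so $p \le q$. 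Hence $p = \sup_{S_i} C$; the case of $\inf_{S_i} C$ is dual. Applying this with $D = S_i$ shows that every nonempty chain of $S_i$ has a supremum and an infimum, so the lattice $S_i$ is complete — hypothesis~(1) (equivalently, Frink's theorem on lattices compact in their interval topology). Applying it with $D = [u_i(\cdot, s_{-i}) \ge t]$, which is $\tau_i$-closed because $u_i(\cdot, s_{-i})$ is upper semicontinuous in $\tau_i$, shows that for every nonempty chain $C$ of this level set both $\inf_{S_i} C$ and $\sup_{S_i} C$ lie in it; so the level set is chain-complete downwards and chain-bounded above — hypothesis~(4).

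Hypotheses~(2) and~(3) come from the usual cardinal-to-ordinal passage. A supermodular real-valued function is quasisupermodular, hence weakly quasisupermodular by the implications recorded just after Definition~\ref{df:supextremal}; so $u_i(\cdot, s_{-i})$ satisfies hypothesis~(2). A function with increasing differences in $(S_i, S_{-i})$ satisfies the single crossing property relative to $(S_i, S_{-i})$: for $x_i < x_i'$ and $s_{-i} < s_{-i}'$ the inequality $u_i(x_i', s_{-i}) - u_i(x_i, s_{-i}) \le u_i(x_i', s_{-i}') - u_i(x_i, s_{-i}')$ yields at once the weak and the strict conclusions; so $u_i$ satisfies hypothesis~(3). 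With all four hypotheses verified, Theorem~\ref{thm:structure} gives that the set of Nash equilibria is a nonempty complete lattice.

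The only step that is not purely formal is the observation of the second paragraph (equivalently, hypothesis~(4)). The delicate point is that an increasing chain converges to its supremum only in the interval topology, not necessarily in the finer topology $\tau_i$; so one cannot simply invoke closedness of the level set, but must extract a $\tau_i$-convergent subnet via $\tau_i$-compactness and then identify its limit with $\sup_{S_i} C$ through the $\tau_i$-closedness of the principal order-intervals. The remaining ingredients — completeness of $S_i$, the implication supermodular $\Rightarrow$ weakly quasisupermodular, and increasing differences $\Rightarrow$ single crossing property — are routine.
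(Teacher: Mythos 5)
Your proof is correct and follows the paper's overall strategy --- reduce to Theorem \ref{thm:structure}, with the same cardinal-to-ordinal observations (supermodular $\Rightarrow$ weakly quasisupermodular, increasing differences $\Rightarrow$ single crossing property) --- but you handle the topological part by a different device. The paper first invokes the Frink--Birkhoff theorem to get completeness of each $S_i$, then notes that each level set $[u_i(\cdot,s_{-i})\ge t]$ is closed, hence $\tau_i$-compact, that its own interval topology is coarser than the subspace topology, and applies Lemma \ref{lm:posetFrink} (compact in the interval topology $\Rightarrow$ chain-complete). You instead prove one uniform statement: in a $\tau_i$-compact lattice with $\tau_i$ finer than the interval topology, every nonempty chain contained in a $\tau_i$-closed set has its supremum and infimum in that set, via a convergent subnet and the $\tau_i$-closedness of the principal rays $\{x\ge a\}$ and $\{x\le b\}$. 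This bypasses Lemma \ref{lm:posetFrink} and the subposet interval-topology comparison, and treats $S_i$ and the level sets in one stroke, whereas the paper's route isolates a reusable poset lemma. Two small remarks: (i) your step from ``every nonempty chain in $S_i$ has a sup and an inf'' to ``$S_i$ is complete'' is not free --- it is exactly Lemma \ref{lm:MilgromRoberts} (via Veinott's lemma), which you should cite rather than only gesture at Frink's theorem; (ii) you assume $S_i$ is $\tau_i$-compact, which is not literally in the corollary's wording, but the paper's own proof needs the same assumption (closedness implying compactness, and Frink--Birkhoff), and it matches the description of Zhou's theorem in the introduction, so this is the intended reading rather than a gap.
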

\begin{proof}
	By the Frink-Birkhoff theorem  \cite[Thm.~20, p.250]{birkhoff1940lattice}, every $S_i$ is complete lattice. Supermodular functions are weakly quasisupermodular. Functions having increasing differences satisfy the single crossing property.  By upper semicontinuity, for every $i\in N$, every $s_{-i}\in S_{-i}$ and every $t\in \R$, the set	 $[u_i(\cdot,s_{-i})\ge t]$ is closed in $(S_i,\tau_i)$, so compact.  The interval topology of $[u_i(\cdot,s_{-i})\ge t]$ is coarser than  the subspace topology $\tau_i|_{[u_i(\cdot,s_{-i})\ge t]}$, so also compact. By Lemma \ref{lm:posetFrink}, the poset $[u_i(\cdot,s_{-i})\ge t]$ is chain-complete. Then we conclude by Theorem \ref{thm:structure}. 
\end{proof}
\begin{lm}\label{lm:posetFrink}
	Let $P$ be a poset endowed with the interval topology. If $P$ is compact, then it is chain-complete. 
\end{lm}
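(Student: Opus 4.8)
The plan is to use the finite-intersection-property characterization of compactness, applied to families of closed order-intervals attached to a given chain. I will fix a nonempty chain $C\subseteq P$ and construct $\sup_P C$; by the order-reversing symmetry of the interval topology (the interval topology of $P$ coincides with that of $P^{\op}$) the existence of $\inf_P C$ is the dual statement, so I will only treat the supremum. First I would set, for each $c\in C$, $T_c:=\overline{\{c'\in C\mid c'\ge c\}}$, the closure taken in $P$; each $T_c$ is closed and contains $c$, hence nonempty. Since $C$ is a chain, for any $c_1,\dots,c_n\in C$ the largest of $c_1,\dots,c_n$ lies in every $T_{c_i}$, so the family $\{T_c\}_{c\in C}$ has the finite intersection property; compactness of $P$ then produces a point $p\in\bigcap_{c\in C}T_c$.

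Next I would check that every such $p$ is the least upper bound of $C$, and the whole point of using the sets $T_c$ rather than the bare rays $\{x\ge c\}$ is that each $T_c$ is squeezed between two kinds of closed rays. On the one hand $\{c'\in C\mid c'\ge c\}\subseteq\{x\in P\mid x\ge c\}$ and the latter is closed, so $T_c\subseteq\{x\ge c\}$; hence $p\ge c$ for every $c\in C$, i.e.\ $p$ is an upper bound of $C$. On the other hand, if $q$ is any upper bound of $C$ then $\{c'\in C\mid c'\ge c\}\subseteq C\subseteq\{x\in P\mid x\le q\}$, again a closed set, so $T_c\subseteq\{x\le q\}$ and therefore $p\le q$. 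Combining the two, $p$ is an upper bound of $C$ lying below every upper bound, so $p=\sup_P C$. The infimum is obtained in exactly the same way from the sets $\overline{\{c'\in C\mid c'\le c\}}$, whose finite intersections contain the smallest of finitely many chosen elements.

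The argument is short, and I do not anticipate a genuine obstacle; the one delicate step is the double containment of $T_c$, which is precisely what lets a single appeal to compactness deliver both that $p$ bounds $C$ above and that it is below every upper bound. By contrast, the naive two-step route --- first extracting an upper bound of $C$ from $\bigcap_{c\in C}\{x\ge c\}$ and then trying to minimize over the set of upper bounds --- stalls, since that set need be neither a chain nor closed under meets, so it need not have a least element. I would also flag that ``chain-complete'' must be read as referring to \emph{nonempty} chains: a two-element antichain is compact in its (discrete) interval topology yet has no greatest element, so a top or bottom element of $P$ cannot be expected.
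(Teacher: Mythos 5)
Your proof is correct, and it takes a genuinely different route from the paper's. You make a single appeal to compactness, applied to the closures $T_c=\overline{\{c'\in C\mid c'\ge c\}}$ of the tails of the chain; the sandwich $\{c'\in C\mid c'\ge c\}\subseteq T_c\subseteq [c,\infty)$ together with $T_c\subseteq(-\infty,q]$ for every upper bound $q$ then delivers simultaneously that the point $p$ obtained is an upper bound and lies below every upper bound, so $p=\sup_PC$; the infimum follows by self-duality of the interval topology. The paper instead runs the two-step route you dismiss as ``stalling'': it first applies the finite intersection property to $\{(-\infty,c]\}_{c\in C}$ to show that the set $B$ of lower bounds of $C$ is nonempty, and then makes a \emph{second} compactness appeal to the family $\{[b,c]\}_{b\in B,\,c\in C}$ (whose finite intersections contain the least of the chosen $c$'s, since every such $c$ dominates every $b\in B$); a point of $\bigcap_{b,c}[b,c]$ is a lower bound majorizing all lower bounds, hence $\inf_PC$. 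So the obstacle you flag --- that the set of upper (or lower) bounds has no useful order structure --- is real for a purely order-theoretic minimization, but the paper circumvents it topologically rather than avoiding it. Your version is slightly slicker (one FIP argument, only subbasic closed rays and closures needed), while the paper's works directly with closed intervals and never takes closures of arbitrary sets; both correctly read ``chain-complete'' as referring to nonempty chains, as in the paper's definition.
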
\begin{proof}
	Let $C\subset P$ be a nonempty chain. Let $B:=\{b\in P|b\le c,\forall c\in C\}$. Consider the family of closed subsets $\{(-\infty,c]\}_{c\in C}$ of $P$. Since $C$ is a chain, for every finite subfamily $\{(-\infty,c_i]\}_{i=1}^n$, there is $1\le i_0\le n$ with $c_{i_0}\le c_i$ for all $1\le i\le n$. Then $c_0\in \cap_{i=1}^n(-\infty,c_i]$. As every finite subfamily has nonempty intersection, by compactness,   $B:=\cap_{c\in C}(-\infty,c]$ is nonempty. Consider the family of closed subsets $\{[b,c]\}_{b\in B,c\in C}$. By the same argument,  $\cap_{b\in B,c\in C}[b,c]$ has an element, which is $\inf_PC$. By symmetry, $\sup_PC$ also exists. Therefore, $P$ is chain-complete.
\end{proof}A chain-complete poset may not be compact in the interval topology, as Example \ref{eg:chain-completenotcompact} shows.\begin{eg}\label{eg:chain-completenotcompact}
	Let $P=\{x_i\}_{i\ge1}\sqcup\{y_j\}_{j\ge 1}$. Define a partial order $\le$ such that the only nontrivial relations are as follows. For any positive integers $j\le i$, set $x_i\le y_j$. Then every chain in $P$ has at most two elements, so $P$ is chain-complete. Endow $P$ with the interval topology. The family of closed subsets $\{(-\infty,y_j]\}_{j\ge 1}$ has finite intersection property, since for any finitely many indices $j_1,\dots,j_n$, there is an integer $i$ with $i\ge j_k$ for every $1\le k\le n$. Then $x_i$ is in the corresponding finite intersection. However, the intersection $\cap_{j\ge 1}(-\infty,y_j]$ is empty. Thus, $P$ is not compact.  
\end{eg}Example \ref{eg:cannotuseZhoufixed} shows that Theorem \ref{thm:structure} is a \emph{proper} generalization of Corollary \ref{cor:Zhoucomplete}.
\begin{eg}\label{eg:cannotuseZhoufixed}
	Let $N=\{1,2\}$, $S_i=[0,1]$. Define $u_1:S\to \R$ by \[u_1(s_1,s_2)=\begin{cases}
		1 &\text{ if } s_1\in [0,1/2)\cup\{1\},\\
		0 &\text{ if } s_1\in [1/2,1).
	\end{cases}\] Define $u_2:S\to \R,\quad s\mapsto 0$. Since $R(0,0)=([0,1/2)\cup\{1\})\times [0,1]$ is \emph{not}  subcomplete in $S$, one cannot apply Zhou's fixed point theorem \cite[Theorem 1]{zhou1994set} in this case. Still, the game satisfies the conditions of Theorem \ref{thm:structure}. For example, because $u_1$ is independent of $s_2\in S_2$, Condition \ref{it:singlecrossing} is verified. The set of Nash equilibrium $([0,1/2)\cup\{1\})\times [0,1]$ is a complete lattice. 
\end{eg}
\printbibliography
\end{document}